\theoremstyle{definition} 
\newtheorem{remark}{Remark}
\theoremstyle{theorem} 
\newtheorem{lemma}{Lemma}
\newtheorem{proposition}{Proposition}
\newtheorem{corollary}{Corollary}
\newcommand{\sect}[1]{\setcounter{equation}{0}\section{#1}}
\newcommand\be{\begin{equation}}
\newcommand\ee{\end{equation}}
\newcommand\bea{\begin{eqnarray}}
\newcommand\eea{\end{eqnarray}}
\newcommand\ka{\Lambda}
 \newcommand\ca{\mathfrak{c}}
 \newcommand\nh{\mathfrak{n}}
\newcommand\w{\wedge}
 \newcommand\ro{\eta }
   \newcommand\nullp{{\rm N}}
 \newcommand\dd{{\rm d}}
 \newcommand\cas{ {\mathcal C} }
\newcommand\fun{ {A} }
\newcommand\funb{ {B} }
\newcommand\func{ {\Theta} }
\newcommand\fund{ {\Upsilon} }
\DeclareMathOperator\spn{span}
\def\L{\Lambda}
 \def\mf {\mathfrak}
 \def\>#1{{\bf #1}} 
\def\s{s}
\begin{document}

\

\begin{center}
\baselineskip 24 pt {\LARGE \bf  
Noncommutative (A)dS and Minkowski spacetimes\\ from  quantum Lorentz subgroups}
\end{center}

   \medskip
 \medskip

\begin{center}

{\sc Angel Ballesteros$^{1}$, Ivan Gutierrez-Sagredo$^{1,2}$ and Francisco J.~Herranz$^{1}$}

\medskip
{$^1$Departamento de F\'isica, Universidad de Burgos, 
09001 Burgos, Spain}

{$^2$Departamento de Matem\'aticas y Computaci\'on, Universidad de Burgos, 
09001 Burgos, Spain}
\medskip
 
e-mail: {\href{mailto:angelb@ubu.es}{angelb@ubu.es}, \href{mailto:igsagredo@ubu.es}{igsagredo@ubu.es}, \href{mailto:fjherranz@ubu.es}{fjherranz@ubu.es}}

\end{center}

\medskip

\begin{abstract}
\noindent

The complete classification of classical $r$-matrices generating quantum deformations of the (3+1)-dimensional (A)dS and Poincar\'e groups such that their Lorentz sector is a quantum subgroup is presented. It is found that there exists three classes of such $r$-matrices, one of them being a novel two-parametric one. The (A)dS and Minkowskian Poisson homogeneous spaces corresponding to these three deformations are explicitly constructed in both local and ambient coordinates. Their quantization is performed, thus giving rise to the associated noncommutative spacetimes, that in the Minkowski case are naturally expressed in terms of quantum null-plane coordinates, and they are always defined by homogeneous quadratic algebras. Finally, non-relativistic and ultra-relativistic limits giving rise to novel Newtonian and Carrollian noncommutative spacetimes are also presented.
 \end{abstract}

 \medskip
\medskip\medskip

\noindent
PACS:   \quad 02.20.Uw \quad  03.30.+p \quad 04.60.-m

 \medskip

\noindent
KEYWORDS:    quantum groups;  Minkowski spacetime; (A)dS spacetimes; cosmological constant;   noncommutative spaces;  Lie bialgebras;  Poisson-Lie groups;    Poisson  homogeneous spaces;  contractions; Galilei; Carroll

\newpage

\tableofcontents


\sect{Introduction}

It is well-known that quantum Poincar\'e and (anti-)de Sitter (hereafter  (A)dS) groups (see~\cite{LRNT1991, Maslanka1993, MR1994, Zakrzewski1994poincare, Zakrzewski1995, nullplane95, Zakrzewski1997, PW1996, CatherineDD,BHMN2017kappa3+1, PoinDD} and references therein) provide Hopf algebra deformations of relativistic symmetries which can be used to construct Deformed Special Relativity (DSR) models~\cite{Amelino-Camelia2001testable,Kowalski-Glikman2001,Amelino-Camelia2002planckian,MS2002,LN2003versus,Amelino-Camelia2010symmetry} in which the quantum deformation parameter is assumed to be related to the Planck scale. In this setting, quantum gravity effects are assumed to be described in a schematic way through the associated noncommutative Minkowski and (A)dS spacetimes. Moreover, in the (A)dS case the interplay between the non-vanishing cosmological constant $\Lambda$ and the Planck scale parameter can be explicitly described.

In this approach, the natural question concerning the complete classification of quantum Poincar\'e and (A)dS groups in (3+1) dimensions arises, since solving this problem will provide the full set of available Hopf algebra deformations of relativistic symmetries. In the Poincar\'e case, such classification was given in~\cite{PW1996} at the quantum group level, while classical $r$-matrices generating all possible quantum deformations were classified in~\cite{Zakrzewski1997}. However, to the best of our knowledge, similar results are not available for the (A)dS case yet. Besides its intrinsic mathematical relevance, the latter result would be indeed relevant for DSR Hopf algebra models dealing with the propagation of particles on quantum spacetimes at cosmological distances.

The aim of this paper is to contribute to  fill this gap, at least partially, by  providing the complete classification of classical $r$-matrices of the Poincar\'e and (A)dS groups endowed with an outstanding condition: that their $\mathfrak{so}(3,1)$ Lorentz sector has to be a Hopf subalgebra after the full quantum group is constructed. As we will show, this problem can be fully and explicitly solved by considering that  the complete classification of classical $r$-matrices of the $\mathfrak{so}(3,1)$ Lorentz group is known~\cite{Zakrzewski1994lorentz}, and the answer leads to only three families of $r$-matrices generating quantum Lorentz subgroups. One of them is a novel biparametric $r$-matrix  whose associated noncommutative spacetime is constructed and analysed in detail. Remarkably enough, the cosmological constant parameter $\Lambda$ does not appear in any of the three $r$-matrices, which means that these three solutions are invariant under the flat limit $\Lambda\to 0$. Moreover, it is also explicitly shown that there does not exist any non-trivial quantum Poincar\'e or (A)dS group for which the Lorentz sector is a non-deformed one.

It is worth stressing that  the well-known $\kappa$-deformations~\cite{LRNT1991,Maslanka1993,MR1994,Zakrzewski1994poincare,ASS2004,BHMN2017kappa3+1,BGH2019kappaAdS3+1} do not belong to the class of quantum groups here studied. Therefore, the new noncommutative Minkowski and (A)dS spacetimes here presented would be worth to be considered in order to construct new DSR models. As we will see in detail, in the Minkowskian case these novel noncommutative spacetimes are defined through a  homogeneous quadratic algebra of spacetime operators, and the natural setting for them is to consider noncommutative null-plane coordinates. Consequently, these new noncommutative Minkowskian spacetimes are quite different from the   $\kappa$-Minkowski  noncommutative spacetime~\cite{Maslanka1993} and from its linear-algebraic generalizations~\cite{Daszkiewicz2008,BorowiecPachol2009jordanian,BP2014extendedkappa}  (see also references therein).

The paper is organized as follows. 
In the next Section the  family of (3+1)-dimensional (A)dS and Poincar\'e Lie algebras, Lie  groups and classical homogeneous  spacetimes is presented
in a unified setting by making use of   the cosmological constant $\L$, where these three Lie groups are collectively denoted by   $G_\L$. After recalling in Section~\ref{s3} the basic tools and concepts needed for the paper, we present our main results in Section~\ref{s4} by taking into account that any quantum deformation   
 for $G_\L$ has to be generated by an underlying coboundary Lie bialgebra structure coming from a    solution of the modified classical Yang--Baxter equation. In particular, we firstly prove that there does not exist any quantum deformation for the family $G_\L$ with an undeformed Lorentz subgroup. Secondly,  we explicitly compute all  classical $r$-matrices for the Poincar\'e and (A)dS groups that present a Lorentz sub-Lie bialgebra structure.      It comes out that there only exist three types of such $r$-matrices, which are explicitly obtained: two of them depends on a single deformation parameter, while the other one is a two-parametric deformation. The (2+1)-dimensional counterpart of this classification is also presented in 
 Section~\ref{s41} and  the differences  that   arise between both dimensions are analysed.
     Furthermore,  our results entail a classification of (3+1)-dimensional (A)dS and Minkowski noncommutative spacetimes associated to deformations with a quantum Lorentz subgroup, whose semiclassical counterpart corresponds to Poisson homogeneous spaces  of Poisson Lorentz subgroup type.  The latter are deduced in Section~\ref{s5}  for the three types of quantum deformations by considering   both local and ambient coordinates, and are summarized in Table~\ref{table1}.      
     In addition, we   perform  in Section~\ref{s6} the  non-relativistic and ultra-relativistic limits of  the above classification, giving rise to  
   Newtonian and Carrollian  quantum deformations, for which the relevant role formerly played by the Lorentz subgroup is now replaced by the three-dimensional (3D) Euclidean one generated by rotations and boost transformations. Their corresponding noncommutative spacetimes are also derived and presented in Table~\ref{table2}.
  Finally some remarks and open problems close the paper.
  
   \newpage


\sect{(A)dS and Poincar\'e  groups: a joint description}

In our framework we will consider the (3+1)D Poincar\'e and (A)dS Lie algebras expressed in a unified setting as  a one-parametric family of Lie algebras  denoted by $\mathfrak g_\L$    depending   explicitly on the cosmological constant parameter $\Lambda$. In the usual kinematical basis, spanned by the   generators of time translations $P_0$, spatial translations  $\>P=(P_1,P_2,P_3)$, boost transformations $\>K=(K_1,K_2,K_3)$  and rotations $\>J=(J_1,J_2,J_3)$, the      commutation relations of  $\mathfrak g_\L$ read
\begin{equation}
\label{eq:ads_3+1}
\begin{array}{lll}
[J_a,J_b]=\epsilon_{abc}J_c ,& \quad [J_a,P_b]=\epsilon_{abc}P_c , &\quad [J_a,K_b]=\epsilon_{abc}K_c , \\[2pt]
\displaystyle{
  [K_a,P_0]=P_a  } , &\quad\displaystyle{[K_a,P_b]=\delta_{ab} P_0} ,    &\quad
  \displaystyle{[K_a,K_b]=-\epsilon_{abc} J_c} , 
\\[2pt][P_0,P_a]=- \L  K_a , &\quad   [P_a,P_b]=\L \epsilon_{abc}J_c , &\quad[J_a,P_0]=0    .
\end{array}
\end{equation}
From now on sum over repeated indices will be understood  unless otherwise stated. Hereafter, Latin indices run as $a,b,c=1,2,3$, and Greek ones as  $\mu=0,1,2,3$. The family of Lie algebras  $\mathfrak g_\L$ encompasses the dS   algebra $\mathfrak{so}(4,1)$ when $\Lambda > 0$, the AdS   algebra $\mathfrak{so}(3,2)$ if $\Lambda < 0$ and the Poincar\'e one  $\mathfrak{iso}(3,1)$  for $\Lambda = 0$. 
As a vector space, $\mathfrak g_\L$ can be decomposed in the form
\be
\mathfrak g_\L = \mathfrak h \oplus \mathfrak t ,  \qquad \mathfrak h = \spn{ \{\mathbf{K}, \mathbf{J} \}}=\mathfrak{so}(3,1),\qquad \mathfrak{t} = \spn{ \{P_0, \mathbf{P} \}} ,
\label{decom}
\ee
where    $\mathfrak h$ is the Lorentz subalgebra  and $\mathfrak{t} $ is the translation sector.

A faithful representation of   $\mathfrak g_\L$ (\ref{eq:ads_3+1}), $\rho : \mathfrak g_\L \rightarrow \text{End}(\mathbb R ^5)$, for a generic   element $X\in \mathfrak g_\L$ is given by~\cite{BGH2019kappaAdS3+1}
\begin{equation}
\label{eq:repg}
\rho(X)=  x^\mu \rho(P_\mu)   +  \xi^a \rho(K_a) +  \theta^a \rho(J_a) =\left(\begin{array}{ccccc}
0&\L x^0&-\L x^1&-\L x^2&-\L x^3\cr 
x^0 &0&\xi^1&\xi^2&\xi^3\cr 
x^1 &\xi^1&0&-\theta^3&\theta^2\cr 
x^2 &\xi^2&\theta^3&0&-\theta^1\cr 
x^3 &\xi^3&-\theta^2&\theta^1&0
\end{array}\right) .
\end{equation}
 The corresponding exponentiation provides   a one-parametric family of Lie groups,   denoted by $G_\Lambda$, with Lie algebra  $\mathfrak g_\L$. Hence $G_\Lambda$ contains  the dS   group $\mathrm{SO}(4,1)$ for $\Lambda > 0$, the  AdS group $\mathrm{SO}(3,2)$  for $\Lambda < 0$, and  the Poincar\'e  one $\mathrm{ISO}(3,1)$ for $\Lambda = 0$. Note that the exponentiation of (\ref{eq:repg}) only recovers the connected component to the identity of these three Lie groups, but the complete description of these groups can be obtained by considering the parity  and time-reversal involutive automorphisms.   The family $G_\Lambda$ can be parametrized in terms of  local coordinates $\{x^\mu, \xi^a, \theta^a \}$ in the form
 \begin{align}
\begin{split}
\label{eq:Gm}
&G_\L= \exp{x^0 \rho(P_0)} \exp{x^1 \rho(P_1)} \exp{x^2 \rho(P_2)} \exp{x^3 \rho(P_3)} \\
&\qquad\quad\times \exp{\xi^1 \rho(K_1)} \exp{\xi^2 \rho(K_2)} \exp{\xi^3 \rho(K_3)}
 \exp{\theta^1 \rho(J_1)} \exp{\theta^2 \rho(J_2)} \exp{\theta^3 \rho(J_3)} .
\end{split}
\end{align}
 These coordinates are the so-called exponential coordinates of the second kind on $G_\Lambda$.

Therefore $G_\L$ comprises the isometry Lie groups of the (3+1)D  Minkowski and (A)dS spacetimes, which we   denote collectively  by $M_\L$ and have a constant sectional curvature given by $-\L$. For each of these three spacetimes, the stabilizer of a point is the Lorentz subgroup  $H$ with Lie algebra $\mathfrak h  $ (\ref{decom}). Thus there exists a global isometry between $M_\L$ and the left coset space  $G_\L/H$, so that we write 
\be
M_\L = G_\L/H,\qquad  H={\rm SO}(3,1)= \langle \>K,\>J \rangle.
\label{ml}
\ee
Hence $M_\L$ is a family of symmetric  homogeneous spaces, and we can identify their tangent space at every point $m = g H \in M_\L$,  $g\in G_\L$,  with the translation sector, \emph{i.e.} 
\begin{equation}
T_{m} (M_\L) = T_{gH} (G_\L/H) \simeq \mathfrak{g}_\L/\mathfrak{h} \simeq \mathfrak{t} = \spn{ \{P_0, \mathbf{P} \}} .
\end{equation}
The four Lie group local coordinates  $x^\mu$ in    \eqref{eq:Gm},  associated to the translation generators $P_\mu$, descend to coordinates on $M_\L$. 

The representation of  $G_\L$ (\ref{eq:Gm}), coming   from (\ref{eq:repg}), allows us to  consider $G_\L$ as the isometry group of the 5D linear space $(\mathbb R^5, \mathbf I_{\ka})$, with canonical linear ambient coordinates
$ (\s^4,\s^\mu)$, such that $\mathbf I_{\ka}$ is the bilinear form given by
\begin{align}
&\mathbf I_{\ka}={\rm diag}(+1,-\ka,\ka,\ka ,\ka),
\label{bf}
\end{align}
fulfilling that $ G_\ka^T \, \mathbf I_{\ka}\, G_\ka =\mathbf I_{\ka} $.  The point with ambient coordinates $O =(1,0,0,0,0)$  is invariant under the action of the Lorentz subgroup $H$, and will be taken as the origin of $M_\L$. The orbit
passing through $O$ corresponds to the (3+1)D     spacetime  $M_\L$ (\ref{ml}) defined by the pseudosphere
\begin{equation}
\Sigma_\ka\equiv ( s^4)^2 - \ka  (\s^0)^2 +\ka \bigl( (\s^1)^2+ (\s^2)^2+ (\s^3)^2 \bigr)=1  ,
\label{pseudo}
\end{equation}
determined by $\mathbf I_{\ka}$ (\ref{bf}).  In the flat limit   $\ka\to 0$, the Minkowski spacetime will be identified with the hyperplane $\s^4=+1$ containing $O$. We remark that the coordinates
\be
q^\mu=\frac{\s^\mu}{\s^4}
\label{beltrami}
\ee
are just the  Beltrami projective coordinates in  $M_\L$ which can be obtained through
 the  projection with pole  
$(0,0,0,0,0)\in \mathbb R^{5}$ of a point with ambient coordinates $(\s^4,\s^\mu) $ onto the projective hyperplane with $\s^4=+1$ (see~\cite{BGH2020snyder} for details).  

Now the set of spacetime local coordinates $x^\mu$ can be introduced through the following action onto the origin $O$ of the one-parameter subgroups of $G_\L$ (\ref{eq:Gm})~\cite{BGH2019kappaAdS3+1}
\be
(s^4,s^\mu)^T= \exp{x^0 \rho(P_0)} \exp{x^1 \rho(P_1)} \exp{x^2 \rho(P_2)} \exp{x^3 \rho(P_3)} O^T , 
\ee
thus yielding
\begin{align} 
\begin{split}
\label{ambientspacecoords}
&s^4=\cos( \ro x^0) \cosh( \ro x^1) \cosh( \ro x^2 )\cosh( \ro x^3 ), \\
&s^0=\frac {\sin( \ro x^0)}\ro  \cosh( \ro x^1 )\cosh( \ro x^2 )\cosh( \ro x^3), \\
&s^1=\frac {\sinh( \ro x^1) }\ro   \cosh (\ro x^2) \cosh( \ro x^3), \\
&s^2=\frac { \sinh (\ro x^2)} \ro\cosh( \ro x^3), \\
&s^3=\frac { \sinh (\ro x^3)} \ro ,
\end{split}
\end{align}
where the parameter  $\ro$ is defined by 
\be
\ro^2:=-\Lambda .
\label{constant}
\ee
Thus $\ro$ is real for the AdS space
and a pure imaginary number  for the  dS one. 
The four spacetime coordinates  $x^\mu$  are called geodesic parallel coordinates~\cite{BHMN2014sigma}, and can be regarded as a generalization of the flat  Cartesian coordinates to non-zero curvature. In fact, 
under  the vanishing cosmological constant limit, $\ro\to 0$,    the ambient \eqref{ambientspacecoords} and Beltrami (\ref{beltrami}) coordinates reduce to the usual Cartesian ones in the Minkowski spacetime: 
\be
(\s^4,\s^\mu )\equiv (1,q^\mu )\equiv  (1,x^\mu ).
\label{bzz}
\ee
In ambient coordinates, the  time-like metric  on the   (3+1)D   $M_\L$ spacetime comes from the pseudosphere (\ref{pseudo}) and
 turns out to be~\cite{BGH2020snyder}
\bea
&&\!\!\!\!\!\!\!\!\!\!\!\! {\rm d} \sigma_\ka^2=\left.\frac {1}{-\ka}
\biggl( ({\rm d} \s^4)^2-\ka \bigl( ( \dd \s^0)^2 - (\dd \s^1)^2- (\dd\s^2)^2- (\dd\s^3)^2 \bigr) 
\biggr)\right|_{\Sigma_\ka} \!\!  \nonumber\\[4pt]
&&=    \frac{  -\ka\left(  \s^0{\rm d} \s^0- \s^1 \dd\s^1-\s^2 \dd\s^2-\s^3 \dd\s^3  \right)^2} {1 +  \ka  \bigl( (\s^0)^2 - (\s^1)^2- (\s^2)^2- (\s^3)^2 \bigr)    }+  (\dd \s^0)^2- (\dd \s^1)^2  - ( \dd \s^2)^2- (\dd \s^3)^2  .
\label{metric1}
\eea
From this expression the metric in terms of Beltrami coordinates (\ref{beltrami}) can be obtained~\cite{BGH2020snyder}, and in    geodesic parallel coordinates  \eqref{ambientspacecoords} the metric reads~\cite{BGH2019kappaAdS3+1}
\bea
&&\dd\sigma_\ka^2 =\cosh^2(\ro x^1) \cosh^2(\ro x^2)\cosh^2(\ro x^3) (\dd x^0)^2-\cosh^2(\ro
x^2) \cosh^2(\ro x^3)(\dd x^1)^2 \nonumber\\ 
&&\qquad \quad    -\cosh^2(\ro x^3)( \dd x^2)^2- (\dd x^3)^2 \, .
\label{metric2}
\eea
Indeed, when $\ro\to 0$ expressions (\ref{metric1}) and  (\ref{metric2})  reduce to the usual metric on the Minkowski spacetime:
\be
\dd\sigma_0^2 =  (\dd \s^0)^2- (\dd \s^1)^2  - ( \dd \s^2)^2- (\dd \s^3)^2 =  (\dd x^0)^2- (\dd x^1)^2  - ( \dd x^2)^2- (\dd x^3)^2.
\ee


\sect{Lie bialgebras and Poisson homogeneous spaces}
\label{s3}

 Let us briefly recall the basic notions needed for the paper, and set up the notation; for more details we refer to~\cite{ChariPressley1994}. A Lie bialgebra is a pair $(\mf g, \delta)$ where $\mf g$ is a Lie algebra and 
\begin{equation}
\delta : \mf g \rightarrow   \mf g\wedge \mf g
\end{equation}
is a linear map called cocommutator and satisfying the following conditions 
\begin{equation}
\begin{split}
&\sum_{cycl} (\delta \otimes \mathrm{id}) \circ \delta (X) = 0, \qquad     \forall X \in \mf g, \\
&\delta ([X,Y]) = \mathrm{ad}_X \delta (Y) - \mathrm{ad}_Y \delta (X), \qquad     \forall X, Y \in \mf g. \\
\end{split}
\end{equation}
The first expression is called co-Jacobi condition since this is equivalent to requiring that the transpose map $^t \delta : \mf g^* \w \mf g^* \rightarrow \mf g^*$ satisfies the Jacobi identity. The second relation states that the map $\delta$ is a 1-cocycle in the Chevalley--Eilenberg cohomology with values in $\mf g \wedge \mf g$. Given a Lie bialgebra $(\mathfrak{g},\delta)$ and a  Lie subalgebra  $\mathfrak{h}$ of $\mathfrak{g}$,  the pair  $(\mathfrak{h},\delta |_\mathfrak{h})$ is said to be  a sub-Lie bialgebra of $(\mathfrak{g},\delta)$ in the case that  $\delta(\mathfrak{h}) \subset \mathfrak{h} \wedge \mathfrak{h}$.

For some Lie bialgebra structures the cocommutator map can be completely  defined in terms of an element $r \in  \mf g\wedge \mf g$. In these cases, called 1-coboundaries, the map
\begin{equation}
\label{eq:deltar}
\delta_r(X) = \mathrm{ad}_X (r)
\end{equation}
defines a Lie bialgebra structure  if and only if $r$ fulfils the modified classical Yang--Baxter equation (mCYBE)
\begin{equation}
\label{mCYBE}
\textrm{ad}_X [[r,r]] = 0,\qquad     \forall X \in \mf g ,
\end{equation}
 where the algebraic Schouten bracket  is defined by 
\begin{equation}
[[r,r]] := [r_{12},r_{13}] + [r_{12},r_{23}] +[r_{13},r_{23}] .
\end{equation}
Therefore   an element $r \in  \mf g\wedge \mf g$ satisifes  the mCYBE (\ref{mCYBE})  if and only if 
\be
[[r,r]] \in \biggl( \bigwedge^3 \mf g \biggr)_{\mf g} \, ,
\label{mCYBE2}
\ee
\emph{i.e.} if  its algebraic Schouten bracket is a $\mf g$-invariant element of $\bigwedge^3 \mf g$. Particular solutions of the mCYBE are those which fulfil the classical Yang--Baxter equation (CYBE)
\begin{equation}
\label{CYBE}
[[r,r]] = 0,
\end{equation}
that is,  the algebraic Schouten bracket vanishes identically. Solutions of the CYBE are called triangular (or nonstandard)     classical $r$-matrices, while solutions of the mCYBE that are not solutions of the CYBE are called quasitriangular (or standard)   classical $r$-matrices. A Lie bialgebra $(\mf g, \delta)$ is then called a coboundary one  if its cocommutator $\delta$ is of the form \eqref{eq:deltar} for some solution of the mCYBE (\ref{mCYBE}).

A Poisson--Lie (PL) group is the global object integrating a Lie bialgebra structure. More explicitly, a PL group is a pair $(G,\Pi)$ where $G$ is a Lie group and $\Pi$ is a Poisson structure such that the Lie group multiplication $\mu : G \times G \rightarrow G$ is a Poisson map with respect to $\Pi$ on $G$ and  the product Poisson structure $\Pi_{G\times G} = \Pi \oplus \Pi$ on $G \times G$. The relation between the Poisson bivector field and the Poisson bracket is given by 
\begin{equation}
(\mathrm d f_1 \otimes \mathrm d f_2) \Pi = \{f_1,f_2 \}_\Pi \, .
\end{equation}
A Lie subgroup $H$ of $G$ is said to be a PL subgroup of $(G,\Pi)$ if $(H,\Pi |_H)$ is a Poisson submanifold  of $(G,\Pi)$.
  A PL group is called coboundary if its tangent Lie bialgebra is a coboundary one. Let $(G,\Pi)$ be a coboundary PL group with tangent Lie bialgebra $(\mf g, \delta_r)$ \eqref{eq:deltar}, then the Poisson bivector on $G$ is given by the Sklyanin bivector  
\begin{equation}
\label{eq:sklyaninbiv}
\Pi = r^{ij} \left( X^L_i \otimes X^L_j - X^R_i \otimes X^R_j \right),
\end{equation}
where $ X^L_i $ and $X^R_i$ denote the left- and right-invariant vector fields, respectively.

  A    Poisson manifold $(M,\pi)$  is  a manifold $M$ endowed  with a  Poisson structure $\pi$ on $M$. A Poisson homogeneous space (PHS) for a PL group $(G,\Pi)$ is a Poisson manifold $(M,\pi)$ which is endowed with a  transitive  group action   $\alpha : (G \times M, \Pi \oplus \pi ) \rightarrow (M, \pi)$  which is a Poisson map.   When the manifold is a  homogeneous space,   $M=G/H$, there does exist a   
 distinguished  class of PHS  for which the Poisson structure $\pi$  on $M$ can be obtained by canonical projection of the PL structure $\Pi$ on $G$. In terms of the underlying Lie bialgebra $(\mf g, \delta)$ of $(G,\Pi)$, this  requirement 
 corresponds  to imposing the so-called    coisotropy condition  for the cocommutator $\delta$ with respect to the isotropy subalgebra $\mathfrak h$ of $H$ that is given by~\cite{Lu1990thesis,Ciccoli2006,BMN2017homogeneous} 
 \be
\delta(\mathfrak h) \subset \mathfrak h \wedge \mathfrak g.
\label{coisotropy}
\ee
A  particular case fulfilling the above condition is obtained when the Lie subalgebra $\mathfrak h $ is also a  sub-Lie bialgebra:  
\be
\delta\left(\mathfrak{h}\right) \subset \mathfrak{h} \wedge \mathfrak{h} ,
\label{eq:subLiebialg}
\ee
which implies that  the PHS  is constructed through an isotropy subgroup $H$  such that   $(H,\Pi |_H)$  is a PL subgroup of  $(G,\Pi)$ and this
 is called a ``PHS of Poisson subgroup type".
 
We recall that since a quantum group    is the quantization   of a PL  group $(G,\Pi)$, the    quantization of a coisotropic PHS  $(M=G/H,\pi)$ fulfilling (\ref{coisotropy}) provides  a quantum homogeneous space (or  noncommutative space)  onto which the quantum group   co-acts covariantly \cite{Dijkhuizen1994}. The  coisotropy condition (\ref{coisotropy}) ensures that the commutation relations that define the noncommutative space  at the first-order in all the quantum coordinates close on  a Lie subalgebra which is just the   annihilator $\mathfrak{h}^\perp$ of $\mathfrak{h}$ on the dual Lie algebra $\mathfrak g^*$~\cite{Ciccoli2006,BGM2019coreductive,GH2021symmetry}.

By taking into account the above concepts, we would like to remark that in the next Section
 we shall deduce all the possible Lie bialgebra structures   for the family $\mf g_\L$ (\ref{decom}) such that the Lorentz subalgebra 
 $\mathfrak{h}$  is a sub-Lie bialgebra of   $(\mathfrak{g}_\L,\delta)$. 
We will find three types of such bialgebras being   all of them coboundaries (\ref{eq:deltar}), so coming from solutions  $r \in  \mf g_\L\wedge \mf g_\L$ of the mCYBE (\ref{mCYBE}).  Consequently, each of them provides a PL group 
 $(G_\L,\Pi)$ for the family $G_\L$ (\ref{eq:Gm}) through  the   Sklyanin bivector (\ref{eq:sklyaninbiv}) and, by construction, the Lorentz subgroup $H$ becomes   a PL subgroup $(H,\Pi |_H)$ of $(G_\L,\Pi)$. Furthermore, each of these  PL groups leads to a PHS $(M_\L = G_\L/H,\pi)$
where the Poisson structure $\pi$ on $M_\L$   is obtained by canonical projection of the PL structure $\Pi$ on $G_\L$.
 In Section~\ref{s5} we will first compute explicitly all such PHS and, secondly, we will provide their complete quantum versions, thus obtaining all the 
  (3+1)D noncommutative (A)dS and Minkowski spacetimes which are covariant under the only quantum (A)dS and Poincar\'e groups for which the Lorentz subalgebra has a quantum subgroup structure.


\sect{(A)dS and Poincar\'e bialgebras with a  Lorentz sub-bialgebra}
\label{s4}

We proceed 
  to obtain all the    PL groups $(G_\L,\Pi)$ associated with Lie bialgebras $(\mathfrak{g}_\L,\delta)$ for which the Lorentz Lie algebra has a sub-Lie bialgebra structure. With this aim,  let us recall that 
\begin{remark}
\label{re:PLcob}
All possible Lie bialgebra structures for the family $\mf g_\L$ (\ref{eq:ads_3+1}) are coboundary ones. 
\end{remark}
This fact is straightforward for $\mf{so}(3,2)$ and $\mf{so}(4,1)$ since it is well-known that all   Lie bialgebra structures for semisimple   Lie algebras   are coboundaries. 
For the  Poincar\'e algebra $\mf{iso}(3,1)$ this property was proven in~\cite{Zakrzewski1995,Zakrzewski1997,PW1996}, which   is a particular case   of the more general result stating that all Lie bialgebras for inhomogeneous pseudo-orthogonal Lie algebras $\mf{iso}(p, q)$ with $p + q \ge 3$ are coboundaries~\cite{PW1997inhomogeneous}.

Since PL groups $(G_\L, \Pi)$ are in one-to-one correspondence to   Lie bialgebras and for $\mf g_\L$ these are always coboundaries 
$(\mf g_\L, \delta_r)$ \eqref{eq:deltar},  we  thus face the problem consisting in obtaining all the possible solutions $r$ of the mCYBE \eqref{mCYBE} such that    
 the Lorentz sub-bialgebra condition (\ref{eq:subLiebialg}) is fulfilled, namely that $  \delta_r(\mathfrak{h}) \subset \mathfrak{h} \wedge \mathfrak{h}$.

 We start by noting that the 45D vector space $ \mf g_\L \wedge \mf g_\L$ admits the following ad$_{\mf h}$-invariant decomposition coming from (\ref{decom}):
\begin{equation}
\label{eq:invdecomp_lt}
  \mf g_\L \wedge   \mf g_\L= (   \mf h  \wedge   \mf h )  \oplus  ( \mf h \w \mf t  ) \oplus   (  \mf t \wedge    \mf t   ) .
\end{equation}
Hence any element $r \in  \mf g_\L\wedge \mf g_\L$ can be expressed in the form
\begin{equation}
\label{eq:r_tl}
r =  r^{\mf h \mf h} + r^{\mf h\mf t} + r^{\mf t \mf t},  \qquad  r^{\mf h \mf h} \in   \mf h \wedge    \mf h  , \qquad r^{\mf h \mf t} \in \mf h \w \mf t, \qquad   r^{\mf t \mf t} \in  \mf t \wedge    \mf t ,
\end{equation}
so that  the general element $r \in  \mf g_\L\wedge \mf g_\L$ can be written in the kinematical basis \eqref{eq:ads_3+1} as
\begin{equation}
\label{eq:general_r}
\begin{split}
r = &\sum_{a<b} r^{K_a K_b} K_a \w K_b + \sum_{a<b} r^{J_a J_b} J_a \w J_b + \sum_{a,b} r^{K_a J_b} K_a \w J_b \\
&\quad +\sum_{a,\mu} r^{K_a P_\mu} K_a \w P_\mu + \sum_{a,\mu} r^{J_a P_\mu} J_a \w P_\mu \ +\sum_{\mu<\nu} r^{P_\mu P_\nu} P_\mu \w P_\nu .
\end{split}
\end{equation}
Therefore, $r$ initially depends on 45 parameters $r^{XY}$.
From this expression, we can directly compute the cocommutator map $\delta_r : \mf g_\L \to   \mf g_\L \wedge   \mf g_\L$ using \eqref{eq:deltar}, which defines a Lie bialgebra   if and only if $r$ is a solution of the mCYBE.  


\begin{table}[t]
{\small
\caption{\small \cite{Zakrzewski1994lorentz} Classification  of solutions of the mCYBE for the Lorentz   algebra $\mf {so}(3,1)=\spn{ \{\mathbf{K}, \mathbf{J} \}}$  in   the kinematical basis \eqref{eq:ads_3+1}.}
\label{tab:Lorentz_r}
  \begin{center}
\noindent
\begin{tabular}{l  }
\hline

\hline
\\[-0.4cm]
\\[-0.2cm] 
 Type A:\ \,$r_A=\alpha (- K_2 \wedge K_3 + J_2 \wedge J_3) +\beta (-K_2 \wedge J_3 + K_3 \wedge J_2) + \frac{\eta}{2} K_1 \wedge J_1$ \\[0.2cm]
  Type B:\ \,$r_B=\frac{\beta}{2} (K_1 \wedge K_2 + K_1 \wedge J_3 - K_3 \wedge J_1 - J_1 \wedge J_2) - \frac{\chi'}{2} (K_2 \wedge K_3 - K_2 \wedge J_2 - K_3 \wedge J_3 + J_2 \wedge J_3) $ \\[0.2cm]
  Type C:\ \,$r_C= - (\gamma + \frac{\chi'}{2}) K_2 \wedge K_3 + (\gamma - \frac{\chi'}{2}) J_2 \wedge J_3 - \gamma K_1 \wedge J_1 + \frac{\chi'}{2} (K_2 \wedge J_2 + K_3 \wedge J_3) $ \\[0.2cm]
  Type D:\ \,$r_D= \chi (K_1 \wedge K_2 + K_1 \wedge J_3)$
\\[0.25cm]
\hline

\hline
\end{tabular}
 \end{center}
}
 \end{table} 
 

Obviously, the simplest case to be studied is    whether there exists a non-trivial Lie bialgebra structure for the family $  \mf g_\L$ such that the Lorentz subalgebra has a trivial sub-Lie bialgebra structure $\delta_r(\mathfrak{h})=0$.  By direct computation we find the following negative result:

 \begin{proposition}
 \label{prop1}
The only Poisson--Lie group $(G_\L, \Pi)$ such that $\Pi |_H = 0$ is the trivial one. 
\end{proposition}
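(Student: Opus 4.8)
The plan is to pass to the infinitesimal level and reduce the whole statement to the computation of a space of invariants. By Remark~\ref{re:PLcob} every PL structure on $G_\L$ is a coboundary one, hence encoded in a classical $r$-matrix $r\in\mf g_\L\w\mf g_\L$ through the Sklyanin bivector~\eqref{eq:sklyaninbiv}, and $\Pi\equiv 0$ is equivalent to $\delta_r\equiv 0$. First I would linearize the hypothesis $\Pi|_H=0$ at the group identity: since the Sklyanin bivector vanishes at every point of $H$, its first-order part along the tangent directions $\mf h$ must vanish, and this first-order part is precisely the cocommutator restricted to $\mf h$. Thus $\Pi|_H=0$ forces $\delta_r(\mf h)=0$, which by~\eqref{eq:deltar} reads $\mathrm{ad}_X(r)=0$ for all $X\in\mf h$. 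In other words, $r$ has to be an $\mathrm{ad}_{\mf h}$-invariant element of $\mf g_\L\w\mf g_\L$, and the Proposition becomes the assertion that this space of invariants is trivial.

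To establish this, I would use the $\mathrm{ad}_{\mf h}$-invariant decomposition~\eqref{eq:invdecomp_lt}, which splits the invariant subspace into three independent pieces,
\[
\bigl(\mf g_\L\w\mf g_\L\bigr)^{\mf h}=(\mf h\w\mf h)^{\mf h}\oplus(\mf h\w\mf t)^{\mf h}\oplus(\mf t\w\mf t)^{\mf h},
\]
so that it suffices to treat each summand separately. The conceptual input is that, as a module over $\mf h=\mf{so}(3,1)$, the whole algebra decomposes \emph{independently of} $\L$ (the cosmological constant enters only the brackets internal to $\mf t$, never the $\mf h$-action) as the adjoint representation $\mf h$ plus the self-dual vector representation $\mf t$, these being inequivalent irreducibles. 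After complexification $\mf{so}(3,1)_{\mathbb C}\cong\mf{sl}(2)\oplus\mf{sl}(2)$, with $\mf h\cong(3,1)\oplus(1,3)$ and $\mf t\cong(2,2)$ in the standard chiral labelling, which makes the bookkeeping of trivial components transparent and uniform in $\L$.

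Then for each summand I would verify that no trivial $\mf h$-subrepresentation occurs. For $(\mf h\w\mf h)^{\mf h}$ this holds because $\wedge^2\mf h$ decomposes into the components $(3,1)$, $(1,3)$ and $(3,3)$ only. For $(\mf h\w\mf t)^{\mf h}$ one uses that $\mf h$ and $\mf t$ are inequivalent, so $\mathrm{Hom}_{\mf h}(\mf h,\mf t)=0$. The most delicate case is $(\mf t\w\mf t)^{\mf h}$: although $\mf t$ does carry an invariant bilinear form, that form is \emph{symmetric}, hence lives in $\mathrm{Sym}^2\mf t$, whereas the antisymmetric square satisfies $\wedge^2\mf t\cong\mf h$ and again contains no trivial component. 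All of this can equally be checked by brute force, imposing the six linear conditions $\mathrm{ad}_{K_a}(r)=\mathrm{ad}_{J_a}(r)=0$ on the $45$-parameter element~\eqref{eq:general_r} and solving the resulting homogeneous linear system, whose only solution is $r=0$.

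Collecting the three pieces gives $r=0$, hence $\delta_r\equiv 0$ and $\Pi\equiv 0$, so the only PL group with $\Pi|_H=0$ is the trivial one. The main obstacle is precisely the computation of these invariant subspaces, and within it the $(\mf t\w\mf t)^{\mf h}$ term, where one must be careful to recognise that the single $\mf h$-invariant in $\mf t\otimes\mf t$ is the symmetric metric and therefore contributes nothing to the skew-symmetric $r$-matrix.
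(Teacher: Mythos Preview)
Your proof is correct and takes a genuinely different route from the paper's. The paper proceeds by brute force: it writes down the generic $45$-parameter element~\eqref{eq:general_r}, imposes the six conditions $\delta_r(K_a)=\delta_r(J_a)=0$, and solves the resulting homogeneous system of $6\times 45=270$ linear equations in $45$ unknowns with the aid of \emph{Wolfram Mathematica}, finding $r=0$ as the only solution. You instead invoke the representation theory of $\mf h=\mf{so}(3,1)$: after reducing $\Pi|_H=0$ to $r\in(\wedge^2\mf g_\L)^{\mf h}$, you split this space via the $\mathrm{ad}_{\mf h}$-module decomposition~\eqref{eq:invdecomp_lt} and use the chiral $\mf{sl}(2)\oplus\mf{sl}(2)$ picture to check that none of the three summands carries a trivial subrepresentation. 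Your argument is more conceptual, manifestly $\L$-independent, and free of machine computation; in particular the identification $\wedge^2\mf t\cong\mf h$ explains in one stroke why the unique $\mf h$-invariant in $\mf t\otimes\mf t$ is the \emph{symmetric} Minkowski form and therefore contributes nothing to a skew $r$-matrix. The paper's computational approach, while less illuminating, has the virtue of being entirely explicit and of setting up the same linear-algebra framework that is reused immediately afterwards in the proofs of Lemma~\ref{le:rll} and Proposition~\ref{prop:rPHS}.
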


\begin{proof}
At the Lie bialgebra level, the condition $\Pi |_H = 0$ reads $\delta_r (\mf h) = 0$. 
Hence the six cocommutators $\delta_r(K_a)$ and   $\delta_r(J_a)$ $(a=1,2,3)$  obtained from the generic $r$ (\ref{eq:general_r}) by means of \eqref{eq:deltar} must vanish.  This   amounts to solve a system of $6\times 45=270$ linear equations  with 45 unknowns $r^{XY}$. With the aid of the {\em Wolfram Mathematica} software system, it is found that their unique solution is $r = 0$.
\end{proof}

As a straightforward consequence we  find that
\begin{corollary}
The only Poisson homogeneous space  $(M_\L=G_\L/H, \pi )$ of Poisson Lorentz subgroup type such that $\Pi |_H = 0$ is the trivial one. 
\end{corollary}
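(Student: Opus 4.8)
The plan is to reduce the statement directly to Proposition~\ref{prop1}, by following the construction of $\pi$ from $\Pi$. The only thing that needs to be made explicit is the logical chain linking the vanishing of $\Pi|_H$ to the vanishing of the projected Poisson structure $\pi$ on $M_\L$.

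First I would recall the meaning of ``PHS of Poisson Lorentz subgroup type'': by definition such a space $(M_\L = G_\L/H, \pi)$ arises from a PL group $(G_\L, \Pi)$ whose tangent Lie bialgebra $(\mathfrak{g}_\L, \delta_r)$ satisfies the sub-Lie bialgebra condition \eqref{eq:subLiebialg}, namely $\delta_r(\mathfrak{h}) \subset \mathfrak{h} \wedge \mathfrak{h}$, and whose Poisson structure $\pi$ on $M_\L$ is obtained by canonical projection of the Sklyanin bivector $\Pi$ \eqref{eq:sklyaninbiv} under the map $G_\L \to G_\L/H$.

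Second I would observe that the hypothesis $\Pi|_H = 0$ is exactly the hypothesis of Proposition~\ref{prop1}; at the infinitesimal level it translates into $\delta_r(\mathfrak{h}) = 0$, which is the strongest possible realisation of \eqref{eq:subLiebialg}. By Proposition~\ref{prop1} this forces the generating classical $r$-matrix to vanish, $r = 0$, and hence the Sklyanin bivector $\Pi$ to be identically zero on the whole of $G_\L$.

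Finally, since $\pi$ is the canonical projection of $\Pi$ and $\Pi \equiv 0$, the induced Poisson structure $\pi$ on $M_\L$ vanishes identically, so the PHS is the trivial one, as claimed. No genuine obstacle arises here: the entire content is carried by Proposition~\ref{prop1}, and the only points requiring care are the elementary remark that the projection of the zero bivector is again zero, together with the identification of the corollary's hypothesis $\Pi|_H = 0$ with that of the proposition.
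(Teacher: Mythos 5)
Your proof is correct and follows exactly the route the paper intends: the paper states the corollary as a ``straightforward consequence'' of Proposition~\ref{prop1}, and your argument simply makes that chain explicit (hypothesis $\Pi|_H=0$ matches the proposition, which forces $r=0$ and hence $\Pi\equiv 0$, so the projected structure $\pi$ vanishes). Nothing is missing and no genuinely different idea is introduced.
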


Therefore   the relevant question about the existence of (non-trivial) quantum symmetries that maintain the Lorentz sector undeformed is solved.  Since these symmetries are necessarily quantizations (formal deformations) of the PL ones, we have proven that they cannot exist.

Next we investigate the existence  of PL structures for the Poincar\'e and (A)dS groups with  a non-trivial Poisson Lorentz subgroup.  This requirement  implies that $\mf  h$ has to be endowed with a  sub-Lie bialgebra structure (\ref{eq:subLiebialg}) with $\delta_r(\mathfrak{h})\ne 0$. Under this condition     the cocommutators  $\delta_r(K_a)$ and   $\delta_r(J_a)$ $(a=1,2,3)$ provided by $r$ (\ref{eq:general_r})  lead to a system of 180 linear equations whose solution is $r^{K_a P_\mu} = r^{J_a P_\mu} = r^{P_\mu P_\nu} = 0$ for all $a\in\{1,2,3\}$ and $\mu,\nu\in\{0,1,2,3\}$.  In this way, $r$  \eqref{eq:r_tl} reduces to $r = r^{\mf h \mf h} \in  \mf h\wedge  \mf h$, which means that in the kinematical basis we are using, only the terms  $r^{K_a K_b}$, $r^{K_a J_b}$ and $r^{J_a J_b}$ from (\ref{eq:general_r}) do not vanish. Moreover, these terms have to be further constrained by the mCYBE. This result can be summarized as follows:

\begin{lemma}
\label{le:rll}
Lie bialgebra structures for $\mf g_\L$ such that $\delta_r (\mf h) \subset \mf h \w \mf h$ are in one-to-one correspondence with elements $r = r^{\mf h \mf h} \in  \mf h\wedge  \mf h$ satisfying the mCYBE  (\ref{mCYBE}), i.e. 
\begin{equation}
[[r,r]]  = [[r^{\mf h \mf h}, r^{\mf h \mf h}]] \in \left( \bigwedge^3 \mf h \right)_{\mf g_\L} .
\label{lem1}
\end{equation}
\end{lemma}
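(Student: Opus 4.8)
The plan is to prove the two implications hidden in the claimed one-to-one correspondence, using Remark~\ref{re:PLcob} to restrict attention to coboundaries from the very start. Since every Lie bialgebra structure on $\mf g_\L$ is of the form $\delta=\delta_r$ for some $r\in\mf g_\L\w\mf g_\L$ solving the mCYBE \eqref{mCYBE}, the whole task reduces to characterising those solutions $r$ of \eqref{mCYBE} for which the Lorentz sub-bialgebra condition \eqref{eq:subLiebialg}, $\delta_r(\mf h)\subset\mf h\w\mf h$, holds. The engine of the argument is the $\mathrm{ad}_{\mf h}$-invariant splitting \eqref{eq:invdecomp_lt}, together with the fact that $\mf t$ is an $\mf h$-submodule (every bracket $[\mf h,\mf t]$ closes in $\mf t$, as one reads off \eqref{eq:ads_3+1}).

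First I would establish the reduction $r=r^{\mf h\mf h}$. Writing $r=r^{\mf h\mf h}+r^{\mf h\mf t}+r^{\mf t\mf t}$ as in \eqref{eq:r_tl} and using that \eqref{eq:invdecomp_lt} is $\mathrm{ad}_{\mf h}$-stable, for $X\in\mf h$ the three pieces of $\delta_r(X)=\mathrm{ad}_X r$ lie in the three respective summands $\mf h\w\mf h$, $\mf h\w\mf t$ and $\mf t\w\mf t$. Hence \eqref{eq:subLiebialg} forces $\mathrm{ad}_X r^{\mf h\mf t}=0$ and $\mathrm{ad}_X r^{\mf t\mf t}=0$ for every $X\in\mf h$, i.e. $r^{\mf h\mf t}$ and $r^{\mf t\mf t}$ are $\mathrm{ad}_{\mf h}$-invariant. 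It then remains to see that these invariant subspaces vanish, and here the representation theory of $\mf h=\mf{so}(3,1)$ does the work: $\mf t$ is the four-dimensional vector module $V$, so $\mf t\w\mf t\cong\bigwedge^2 V$ is isomorphic to the adjoint module $\mf h$, whose invariants are the center $Z(\mf h)=0$ since $\mf h$ is simple; while $\mf h\w\mf t\cong\mf h\otimes\mf t$ has no invariants because the self-dual $\mf h\cong\mf h^\ast$ and $\mf t$ share no common irreducible constituent (Schur). This is precisely the structural reason why the brute-force system of $180$ linear equations quoted above collapses to $r^{K_aP_\mu}=r^{J_aP_\mu}=r^{P_\mu P_\nu}=0$; I would present this invariant-theoretic version as the clean proof and regard the collapse as the main obstacle of the lemma.

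With $r=r^{\mf h\mf h}\in\mf h\w\mf h$ in hand, the Schouten bracket $[[r,r]]$ involves only commutators of elements of $\mf h$, which stay in $\mf h$ because $\mf h$ is a subalgebra; thus $[[r,r]]\in\bigwedge^3\mf h$ and, crucially, the computation is literally the same whether carried out in $\mf g_\L$ or inside $\mf h$, giving $[[r,r]]=[[r^{\mf h\mf h},r^{\mf h\mf h}]]$. The mCYBE \eqref{mCYBE}, equivalent to $\mf g_\L$-invariance of $[[r,r]]$, then reads exactly as \eqref{lem1}, $[[r,r]]\in\bigl(\bigwedge^3\mf h\bigr)_{\mf g_\L}$. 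I would stress the subtlety hidden in the subscript $\mf g_\L$: invariance is demanded under all of $\mf g_\L$, not merely under $\mf h$, so among the $\mf{so}(3,1)$ $r$-matrices of Table~\ref{tab:Lorentz_r} (which only guarantee $\mf h$-invariance of $[[r,r]]$) only those whose Schouten bracket is in addition annihilated by $\mathrm{ad}_{\mf t}$ will survive, an extra condition that the subsequent step of the classification exploits.

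Finally, for the converse direction, any $r=r^{\mf h\mf h}\in\mf h\w\mf h$ satisfying \eqref{lem1} defines through \eqref{eq:deltar} a genuine coboundary Lie bialgebra $(\mf g_\L,\delta_r)$, and since $\mathrm{ad}_X r\in\mf h\w\mf h$ for $X\in\mf h$ (again because $\mf h$ is a subalgebra) the sub-bialgebra property $\delta_r(\mf h)\subset\mf h\w\mf h$ holds automatically; this closes the correspondence. It is one-to-one because $\delta_r$ determines $r$ only up to the $\mathrm{ad}$-invariant part of $\mf g_\L\w\mf g_\L$, and the latter is trivial for the three algebras at hand, there being no invariant antisymmetric $2$-tensor.
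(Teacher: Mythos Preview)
Your proof is correct and takes a genuinely different route from the paper. The paper's argument is computational: it writes the generic $45$-parameter $r$-matrix \eqref{eq:general_r}, imposes $\delta_r(\mf h)\subset\mf h\w\mf h$ on the six cocommutators $\delta_r(K_a),\delta_r(J_a)$, and solves the resulting system of $180$ linear equations directly to obtain $r^{K_aP_\mu}=r^{J_aP_\mu}=r^{P_\mu P_\nu}=0$; the remaining $\mf h\w\mf h$ part is then subjected to the mCYBE. Your argument replaces this linear-algebra computation with a representation-theoretic one: the $\mathrm{ad}_{\mf h}$-stability of the splitting \eqref{eq:invdecomp_lt} forces $r^{\mf h\mf t}$ and $r^{\mf t\mf t}$ to be $\mf h$-invariant, and you then identify $\mf t\w\mf t\cong\mf h$ (adjoint, no invariants) and $\mf h\otimes\mf t$ (no common constituents, Schur) to kill both pieces. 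This is cleaner, explains \emph{why} the $180$ equations collapse, and would transfer immediately to any reductive decomposition with the same vanishing of invariants; the paper's approach, by contrast, is mechanical and dimension-specific but requires no structure theory. You also make explicit two points the paper leaves tacit: that $[[r,r]]$ automatically lands in $\bigwedge^3\mf h$ once $r\in\mf h\w\mf h$, and that the correspondence $r\mapsto\delta_r$ is injective because $(\bigwedge^2\mf g_\L)^{\mf g_\L}=0$.
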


We stress that the classification of the solutions of the mCYBE for the Lorentz algebra $\mf {so}(3,1)$ was given in \cite{Zakrzewski1994lorentz} (see also \cite{BLT2016unified,BLT2016unifiedaddendum} for the classification of the Euclidean and Kleinian analogues, and \cite{Kowalski2020} for their corresponding contractions leading  to (2+1)D Poincar\'e and 3D Euclidean  classical $r$-matrices). This result shows that there exists, up to Lie algebra isomorphisms, four multiparametric families of solutions which  are presented   in Table \ref{tab:Lorentz_r} in   the kinematical basis \eqref{eq:ads_3+1}. We point out that   $r_B$ and $r_D$ are solutions of the CYBE \eqref{CYBE}, while $r_A$ and $r_C$ are solutions of the mCYBE \eqref{mCYBE2} with non-vanishing algebraic Schouten bracket.

The previous Lemma together with the classification of solutions of the mCYBE for the Lorentz algebra~\cite{Zakrzewski1994lorentz} allow us to state the following main result.

\begin{proposition}
\label{prop:rPHS}
There exist three classes of Poisson homogeneous spaces $(M_\L=G_\L/H, \pi )$ for  each of the maximally symmetric spacetimes of constant curvature (Minkowski and (A)dS) such that the isotropy Lorentz subgroup  $H$ is a Poisson--Lie subgroup of $(G_\L, \Pi)$. All of them are obtained from coboundary Poisson--Lie structures on their respective isometry group $G_\L$ which are determined, up to  $\mf g_\L$-isomorphisms,  by the classical  $r$-matrices
\begin{equation}
\begin{split}
r_{\rm I} = \, &z\, (K_1 \w K_2 + K_1 \w J_3 - K_3 \w J_1 - J_1 \w J_2) \\
&\quad \ - z' \,(K_2 \w K_3 - K_2 \w J_2 - K_3 \w J_3 + J_2 \w J_3) ,\\
r_{\rm II} = \, &z \,K_1 \w J_1 ,\\
r_{\rm III} =\,  &z \,(K_1 \w K_2 + K_1 \w J_3) ,\\
\end{split}
\label{rmatrices}
\end{equation}
where $z$ and $z'$ are free parameters. These three $r$-matrices are solutions of the CYBE.
\end{proposition}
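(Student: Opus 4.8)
The plan is to reduce the statement to the classification of those $r=r^{\mf h\mf h}\in\mf h\wedge\mf h$ that satisfy the mCYBE inside $\mf g_\L$, which is exactly the content of Lemma~\ref{le:rll}; once such $r$ are found, the existence of the corresponding Poisson homogeneous space of Poisson Lorentz subgroup type is automatic from the discussion in Section~\ref{s3}, since $\delta_r(\mf h)\subset\mf h\wedge\mf h$ implies both the coisotropy condition \eqref{coisotropy} and that $H$ is a PL subgroup. Thus the whole problem becomes the enumeration, up to $\mf g_\L$-automorphisms, of the solutions singled out by \eqref{lem1}.

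The crucial step is to determine the space $\bigl(\bigwedge^3\mf h\bigr)_{\mf g_\L}$ appearing on the right-hand side of \eqref{lem1}. I claim it is trivial, $\bigl(\bigwedge^3\mf h\bigr)_{\mf g_\L}=\{0\}$, so that \eqref{lem1} collapses to the CYBE $[[r,r]]=0$. To see this I would split the invariance requirement according to \eqref{decom}: any $\omega\in\bigwedge^3\mf h$ is automatically $\mathrm{ad}_{\mf h}$-covariant, so $\mf g_\L$-invariance is equivalent to the extra conditions $\mathrm{ad}_{P_\mu}\omega=0$ for $\mu=0,1,2,3$. Because $[\mf t,\mf h]\subset\mf t$, each $\mathrm{ad}_{P_\mu}\omega$ lands in the complementary summand $\mf t\wedge\mf h\wedge\mf h$ of the $\mathrm{ad}_{\mf h}$-invariant decomposition of $\bigwedge^3\mf g_\L$, so these four relations form a homogeneous linear system for the components of $\omega$ whose only solution I expect to be $\omega=0$. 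A useful structural remark is that the brackets $[P_\mu,\mf h]$ in \eqref{eq:ads_3+1} do not involve $\Lambda$ (the cosmological constant enters only through $[\mf t,\mf t]$), so this computation, and hence the whole classification, is manifestly independent of $\Lambda$, which matches the absence of $\Lambda$ in \eqref{rmatrices} and lets the three spacetimes be treated at once. For $\Lambda\neq0$ one may alternatively argue that $\mf g_\L$ is simple, its unique invariant $3$-form is the Cartan--Killing one, and the latter has nonzero components along $\mf t$, hence lies outside $\bigwedge^3\mf h$.

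With $[[r,r]]=0$ established, the remaining task is purely Lorentzian: by Lemma~\ref{le:rll} together with Zakrzewski's classification in Table~\ref{tab:Lorentz_r} I would run through the four families and retain only their triangular (CYBE) members. Types $B$ and $D$ are already triangular and survive in full, giving $r_{\rm I}$ (relabelling $\beta=2z$, $\chi'=2z'$) and $r_{\rm III}$ ($\chi=z$) respectively. Types $A$ and $C$ are genuinely quasitriangular, so only the loci on which their Schouten brackets vanish are admissible: for $A$ this forces $\alpha=\beta=0$, leaving the abelian two-dimensional carrier $\langle K_1,J_1\rangle$ (note $[K_1,J_1]=0$), i.e.\ $r_{\rm II}$ with $\eta=2z$; for $C$ it forces $\gamma=0$, and the resulting element coincides term by term with the $\beta=0$ specialization of type $B$, hence is already the $z=0$ slice of the $r_{\rm I}$ family and contributes no new class. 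Collecting representatives up to $\mf g_\L$-isomorphism then yields exactly the three $r$-matrices \eqref{rmatrices}, all triangular.

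I expect the main obstacle to be the first, invariant-theoretic step: proving cleanly and uniformly in $\Lambda$ that $\bigl(\bigwedge^3\mf h\bigr)_{\mf g_\L}=\{0\}$, since this is what excludes any surviving quasitriangular deformation and is responsible for the perhaps surprising fact that all three admissible $r$-matrices are triangular. The secondary, bookkeeping obstacle is the reduction modulo $\mf g_\L$-automorphisms in the last step: one must verify both that the degenerate type-$C$ solution is genuinely absorbed into $r_{\rm I}$ and that $r_{\rm I}$, $r_{\rm II}$, $r_{\rm III}$ remain pairwise inequivalent under the enlarged automorphism group of $\mf g_\L$ (not merely of $\mf h$), so that the final count of three is exact.
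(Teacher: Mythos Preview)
Your proposal is correct and follows essentially the same route as the paper: reduce via Lemma~\ref{le:rll} to $r\in\mf h\wedge\mf h$, invoke Zakrzewski's classification in Table~\ref{tab:Lorentz_r}, and run through the four families to see which survive the $\mf g_\L$-mCYBE. The paper's proof does exactly this case-by-case, verifying directly that $r_B,r_D$ pass (giving $r_{\rm I},r_{\rm III}$), that $r_A$ passes only for $\alpha=\beta=0$ (giving $r_{\rm II}$), and that $r_C$ passes only for $\gamma=0$ (absorbed into $r_{\rm I}$).

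The one genuine addition in your outline is the intermediate claim $\bigl(\bigwedge^3\mf h\bigr)_{\mf g_\L}=\{0\}$, which would explain \emph{a priori} why all surviving $r$-matrices must be triangular. The paper never states or proves this; it obtains triangularity only \emph{a posteriori} as the common outcome of the four separate checks. Your observation that $[\mf t,\mf h]\subset\mf t$ is $\Lambda$-independent, hence so is the classification, is also made more explicit than in the paper. One small wording issue: ``any $\omega\in\bigwedge^3\mf h$ is automatically $\mathrm{ad}_{\mf h}$-covariant'' does not by itself reduce $\mf g_\L$-invariance to the $\mathrm{ad}_{P_\mu}$ conditions; what you really use is that the $[[r,r]]$ arising from Zakrzewski's list are already $\mf h$-invariant, so only $\mf t$-invariance remains to be imposed. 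With that clarification your argument goes through, and the direct case-check in the paper can be read as the concrete verification of your invariant-theoretic claim.
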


\begin{proof}
A PHS of Poisson   subgroup type  $( G_\L/H, \pi )$ can be obtained by canonical projection from a PL  group $(G_\L, \Pi)$. We already know that any PL  structure $\Pi$ on $G_\L$ is a coboundary one, so that  it is completely determined by a solution of the mCYBE whose algebraic Schouten bracket is $\mf g_\L$-invariant (\ref{mCYBE}). Therefore,   using Lemma \ref{le:rll}  and  the standard inclusion $\mf h \hookrightarrow \mf g_\L$ it follows that there exists a one-to-one correspondence between  PHS of Poisson Lorentz  subgroup type   and the   elements $r   \in  \mf h\wedge  \mf h$ in Table~\ref{tab:Lorentz_r}   fufilling  (\ref{mCYBE}) and \eqref{lem1}.
 The triangular classical $r$-matrices $r_B$ and $r_D$  already satisfy this condition, since their Schouten bracket vanishes, and lead to 
the solutions $r_{\rm I}$ and $r_{\rm III}$, respectively. On the other hand, the quasitriangular classical $r$-matrix $r_C$ verifies the relations   (\ref{mCYBE}) and  \eqref{lem1} 
  if and only if $\gamma = 0$ reducing to the particular case of $r_{\rm I}$ with $z=0$. Finally, $r_A$ fulfils (\ref{mCYBE}) and   \eqref{lem1}  whenever 
   $\alpha = \beta = 0$, thus yielding $r_{\rm II}$ which is also a solution of the CYBE~(\ref{CYBE}).
\end{proof}

Consequently, we have proven that there only exist three types of (3+1)D Poincar\'e and  (A)dS quantum deformations endowed with a non-trivial quantum Lorentz subgroup, whose underlying Lie bialgebras are  determined by $r_{\rm I}$, $r_{\rm II} $ and $r_{\rm III} $. The three classes correspond to triangular or nonstandard deformations. Types II and III are one-parametric deformations, while   type I is a two-parametric one with arbitrary deformation parameters $z$ and $z'$.

In the Poincar\'e case, the correspondence between  the  results given by Proposition~\ref{prop:rPHS}  with   the 21 multiparametric PL structures presented in~\cite{Zakrzewski1997} can be easily established.   The  two-parametric deformation of type I  for the Poincar\'e PL  group $(G_{\L=0}, \Pi)$ is just  case 5 of Table 1 in~\cite{Zakrzewski1997}. The classical $r$-matrix  $r_{\rm II}$ is provided by a   Reshetikhin twist, since both generators in $r_{\rm II}$ (\ref{rmatrices})
do commute, which can be identified with case 1  with parameters $\alpha=\tilde \alpha=0$ in the same Table.   Finally,    the deformation of type III comes from   the lower dimensional  Lorentz subalgebra $\mf {so}(2,1)$ spanned by $\{J_3,K_1,K_2\}$ and 
  $r_{\rm III}$  corresponds to case 6 with parameters $\beta_1=\beta_2=0$  in~\cite{Zakrzewski1997}.


\subsection{The (2+1)-dimensional case} 
\label{s41}

For the sake of completeness, it is worth  comparing the (2+1)D case with the (3+1)D one characterized by Propositions \ref{prop1} and \ref{prop:rPHS}. As it will be shown in the following, the classification in $(2+1)$D is significantly different.

Let us consider the   (2+1)D counterpart $\mathfrak g^{2+1}_\L$ of the family of Lie algebras  $\mathfrak g_\L$ (\ref{eq:ads_3+1}). 
Hence  $\mathfrak g^{2+1}_\L$ is spanned by the six generators $\{P_0,P_1,P_2,K_1,K_2,J_3\}$ in such a manner that the commutation rules are given by (\ref{eq:ads_3+1})   setting the indices $a,b=1,2$ and fixing $c=3$. Thus $\mathfrak g_\L$ comprises the (2+1)D dS algebra  
$\mathfrak{so}(3,1)$ for $\Lambda > 0$, the AdS   algebra $\mathfrak{so}(2,2)$ when $\Lambda < 0$ and the Poincar\'e one  $\mathfrak{iso}(2,1)$  for $\Lambda = 0$. The corresponding  Lorentz subalgebra is   given by
\be
 \mathfrak h^{2+1}= \spn{ \{ K_1,K_2,J_3 \}}=\mathfrak{so}(2,1)\simeq\mf{sl}(2,\mathbb R).
 \label{lo3}
 \ee

We now follow a procedure similar to the above one, so that we consider  the  general element $r^{2+1}\in  \mathfrak g^{2+1}_\L \wedge \mathfrak g^{2+1}_\L$    written as
\begin{eqnarray}
&&\!\!\! \!\!\! 
r ^{2+1}= a_1 J_3 \wedge P_1 + a_2 J_3\wedge K_1 + a_3 P_0\wedge P_1 + a_4 P_0 \wedge K_1 +a_5 P_1\wedge K_1 +a_6 P_1\wedge K_2 \nonumber\\ 
 &&\qquad   +\, b_1 J_3\wedge P_2 + b_2 J_3\wedge K_2 + b_3 P_0\wedge P_2 + b_4 P_0\wedge K_2 +b_5 P_2\wedge K_2 + b_6 P_2\wedge K_1 \label{r21}\\ 
 &&\qquad  +\, c_1 J_3\wedge P_0 + c_2 K_1\wedge K_2 + c_3 P_1\wedge P_2 .
\nonumber
\end{eqnarray}
Therefore, $r ^{2+1}$ now depends on 15 parameters: $a$'s, $b$'s and $c$'s. We recall that the explicit expressions for the   general  cocommutator  $\delta_{r^{2+1} }$   (\ref{eq:deltar}) along with the equations coming from the mCYBE (\ref{mCYBE}) were presented in~\cite{JPCSadS2014} in eq.~(4.1), such that the notation corresponds here to set $J\equiv J_3$ and $\omega=-\Lambda$. In what follows we assume such results.

Firstly, if we   require that the Lorentz subalgebra $ \mathfrak h^{2+1}$ (\ref{lo3}) remains as a trivial sub-Lie bialgebra structure, that is,  $\delta_{r^{2+1} } ( \mathfrak{h}^{2+1})=0$, we obtain that the only non-vanishing parameters in $r ^{2+1}$  (\ref{r21})   are: 
$a_6=-c_1$, $ b_6=c_1$, and $ c_1$. Then, $r ^{2+1}$ reduces to the one-parameter classical $r$-matrix given by
\be
r ^{2+1}= c_1( J_3\wedge P_0 - P_1\wedge K_2 +  P_2\wedge K_1) .
\label{poin21}
\ee
 And the mCYBE provides a single equation:
 \be
 \L c_1^2=0.
\ee
Consequently, when $\L\ne 0$ the solution is the trivial one $r=0$ as in Proposition \ref{prop1}. Nevertheless, in the Poincar\'e case $\mathfrak{iso}(2,1)$ with $\L=0$, there  arises $r ^{2+1}$ (\ref{poin21}) as the single non-trivial solution that keeps the Lorentz subalgebra underformed. Remarkably enough, this solution is just Class (IV) in  the classification of equivalence classes (under automorphisms) of   $ r$-matrices for the  (2+1)D Poincar\'e algebra presented in~\cite{Stachura1998}. In fact, it is mentioned in {\em Remark 2} in~\cite{Stachura1998} that such a solution has no counterpart in higher dimensions. Furthermore $r ^{2+1}$ (\ref{poin21})  also appears in the classification of   Drinfel'd double structures  of the (2+1)D  Poincar\'e group performed in~\cite{PoinDD} as the case 0 (the `trivial' Drinfel'd double structure).
And  $r ^{2+1}$ (\ref{poin21}) has also been obtained in~\cite{Kowalski2020} (see eqs.~(4.50) and (4.52)) through a contraction approach from 
 $\mathfrak{so}(3,1)$  and $\mathfrak{so}(2,2)$.

Secondly, if we impose that $\delta_{r^{2+1} } ( \mathfrak{h}^{2+1})\subset\mathfrak{h}^{2+1} \w \mathfrak{h}^{2+1} \ne 0$ we find that the 
 non-vanishing parameters in $r ^{2+1}$  (\ref{r21}) are: $a_2$, $b_2$, $c_2$, $a_6=-c_1$, $ b_6=c_1$, and $ c_1$. Hence, 
$r ^{2+1}$  reads
\be
r ^{2+1}=a_2 J_3\wedge K_1+b_2 J_3\wedge K_2 + c_2 K_1\wedge K_2+  c_1( J_3\wedge P_0 - P_1\wedge K_2 +  P_2\wedge K_1).
\ee
And the mCYBE~\cite{JPCSadS2014}  leads  to the constraint
\be
c_2^2-a_2^2-b_2^2-4 \L c_1^2=0.
\label{constraint}
\ee
In the Poincar\'e case with $\L=0$, the solution can be reduced, via automorphisms, to $b_2=0$ and $a_2=-c_2$:
\be
r ^{2+1}=  c_2 K_1\wedge (K_2+J_3)+  c_1( J_3\wedge P_0 - P_1\wedge K_2 +  P_2\wedge K_1).
\label{sol2}
\ee
This two-parametric solution is just Class (I) in the classification~\cite{Stachura1998} and corresponds to the Drinfel'd double structure of case 1 
in~\cite{PoinDD}. For $c_1=0$ one recovers the solution $r_{\rm III}$ in Proposition~\ref{prop:rPHS}, which has also been obtained via contraction  in~\cite{Kowalski2020} (see eqs.~(4.27) and (4.41)).

Finally, from the classification of classical $r$-matrices for $\mf {so}(3,1)$ obtained  in~\cite{Zakrzewski1994lorentz} and displayed in 
 Table~\ref{tab:Lorentz_r}, it follows for $\L=+1$ that, under automorphisms, $r ^{2+1}$ (\ref{sol2}), subjected to the constraint (\ref{constraint}),  gives rise to two solutions:
 
\begin{itemize} 

\item If $b_2=0$, $a_2=-c_2$, then $\L c_1^2=c_1^2=0$. 
Hence 
\be
r ^{2+1}=  c_2 K_1\wedge (K_2+J_3),
\ee
 which is  type D in Table~\ref{tab:Lorentz_r} and 
$r_{\rm III}$ in Proposition~\ref{prop:rPHS}.

\item If $a_2=b_2=0$, then $c_2^2-4 c_1^2$ and $c_2= 2 c_1$. This leads to 
\be
r ^{2+1}=  c_1 \bigl(2 K_1\wedge K_2+  J_3\wedge P_0 - P_1\wedge K_2 +  P_2\wedge K_1\bigr),
\ee
which is the particular case of type C in  Table~\ref{tab:Lorentz_r} for $\gamma = \chi'/2\equiv c_1$ and has no (3+1)D counterpart as shown in 
Proposition~\ref{prop:rPHS}.
 
\end{itemize}


\sect{Noncommutative (A)dS and Minkowski spacetimes} 
\label{s5}

In this Section we first present explicitly the   three non-isomorphic  PHS defined by Proposition~\ref{prop:rPHS} and afterwards we perform their quantization.

The three PL structures $(G_\L, \Pi)$ provided by (\ref{rmatrices}) can be explicitly  obtained by computing the left- and right-invariant  vector fields on  $G_\L$  \eqref{eq:Gm}  and then constructing the Sklyanin bivector $\Pi$~\eqref{eq:sklyaninbiv}.
The pushforward of $\Pi$  on $G_\L$    by the canonical projection $p: G_\L \to G_\L/H$ gives rise
to the fundamental Poisson brackets for the PHS in terms of the geodesic parallel coordinates $x^\mu$. From them the corresponding expressions in terms of ambient coordinates $(\s^4,\s^\mu)$~\eqref{ambientspacecoords} can be deduced. 
 The  resulting  fundamental Poisson brackets for the three classes of PHS   are displayed in Table~\ref{table1} both in terms of local (geodesic parallel) $x^\mu$ and ambient coordinates $(\s^4,\s^\mu)$, where the latter are subjected to    the pseudosphere  constraint (\ref{pseudo}).
 
 Nevertheless, it turns out that the ambient coordinate $\s^4$  is always a central element for all three Poisson structures
 \be
   \{ \s^4 ,\s^\mu\}=0 .
   \label{s4commutes}
\ee
This fact means that all the PHS can be defined in terms of the (3+1)   spacetime coordinates $\s^\mu$ and that the pseudosphere condition  (\ref{pseudo}) can be rewritten as 
   \begin{equation}
 (\s^0)^2 - \bigl( (\s^1)^2+ (\s^2)^2+ (\s^3)^2 \bigr)=\frac{( s^4)^2 -1}{ \ka} \, .
\label{pseudo2}
\end{equation}
Hence this relation, automatically, yields a   common quadratic Casimir for the three types of PHS  in Table~\ref{table1}, which is given by
\be
\cas=(\s^0)^2 - (\s^1)^2  -(\s^2)^2 - (\s^3)^2 .
\label{cas}
\ee

  
\begin{table}[hpt]
{\small
\caption{\small  The three types of (A)dS and Minkowski  Poisson homogeneous spaces with Poisson Lorentz subgroups   expressed in  geodesic parallel $x^\mu$ and ambient $\s^\mu$ spacetime coordinates  \eqref{ambientspacecoords} such that   the cosmological constant    $\L=-\eta^2$. The ambient coordinate $s^4$ always Poisson-commutes with $\s^\mu$. 
}
\label{table1}
  \begin{center}
\noindent\quad\ 
\begin{tabular}{l  }
\hline

\hline
\\[-0.2cm]
 {{\bf Type I}\quad $r_{\rm I}=  z(K_1\wedge K_2+K_1\wedge J_3-K_3\wedge J_1-J_1\wedge J_2 )- z'(K_2\wedge K_3-K_2\wedge J_2-K_3\wedge J_3+J_2\wedge J_3 )$}  \\[0.2cm]
\hline
\\[-0.2cm] 
 $\displaystyle{  \{ x^0,x^1\}=z \fun  \cos(\ro x^0) \frac{\tanh(\ro x^2)}{\ro}   }$  \\[0.35cm]
  $\displaystyle{ \{ x^0,x^2\}=-z  \,\frac{ \cos(\ro x^0) }{ \cosh(\ro x^1)} \left( \fun\,\frac{\sinh(\ro x^1)}{\ro}+\frac{\tanh^2(\ro x^3)}{\ro^2}     \right)-z'   \fun \, \frac{ \cos(\ro x^0)  \cosh(\ro x^2) }{ \cosh(\ro x^1)}\,\frac{ \tanh(\ro x^3)}{\ro  }        } $ \\[0.4cm]
     $\displaystyle{ \{ x^0,x^3\}= z\, \frac{ \cos(\ro x^0) }{ \cosh(\ro x^1)} \,\frac{ \tanh(\ro x^2)} {\ro}\, \frac{ \tanh(\ro x^3)} {\ro} +z'  \fun\, \frac{ \cos(\ro x^0) }{ \cosh(\ro x^1)}   \frac{ \sinh(\ro x^2)} {\ro} }$    \\[0.4cm]
  $\displaystyle{ \{ x^1,x^2\}=-z\left(\fun  \, \frac{ \sin(\ro x^0) }{ \ro} -\funb\, \frac{ \tanh^2(\ro x^3)}{\ro^2}  \right) + z' \fun\funb\cosh(\ro x^2)\, \frac{ \tanh(\ro x^3)}{\ro}  } $   \\[0.4cm]
    $\displaystyle{ \{ x^1,x^3\}=-z \funb\,  \frac{ \tanh(\ro x^2)}{\ro}  \,  \frac{ \tanh(\ro x^3)}{\ro}  -z' \fun\funb\,  \frac{ \sinh(\ro x^2)}{\ro}  }$\qquad  
      $ \displaystyle{ \{ x^2,x^3\}=z \fun \,  \frac{\tanh(\ro x^3)}{\ro} + z' \fun^2 \cosh(\ro x^2) }$ \\[0.40cm]
\hline
\\[-0.2cm]
  $\{ \s^0,\s^1\}=z(\s^0+\s^1) \s^2$\qquad   $\{ \s^0,\s^2\}=-z\bigl[(\s^0+\s^1) \s^1+(\s^3)^2  \bigr]-z'  (\s^0+\s^1) \s^3$ \\[0.3cm]   
$\{ \s^0,\s^3\}=z\s^2\s^3 + z'   (\s^0+\s^1) \s^2 $ \qquad \   $\{ \s^1,\s^2\}=-z  \bigl[(\s^0+\s^1)\s^0 -(\s^3)^2  \bigr] +z' (\s^0+\s^1) \s^3$       \\[0.3cm]  
   $\{ \s^1,\s^3\}=-z \s^2 \s^3 -z' (\s^0+\s^1) \s^2$\qquad
    $\{ \s^2,\s^3\}=z  (\s^0+\s^1) \s^3+z'   (\s^0+\s^1)^2$ \\[0.25cm]
  \hline
\\[-0.2cm]
 {{\bf Type II}\quad  $r_{\rm II}=  zK_1\wedge J_1$} \\[0.2cm]
\hline
\\[-0.2cm] 
$\{ x^0,x^1\}=0$\qquad    $\displaystyle{   \{ x^0,x^2\}=z \cos(\ro x^0)\, \frac{\tanh(\ro x^1)}{\ro}\,  \cosh(\ro x^2) \,\frac{ \tanh(\ro x^3)} {\ro} }$   \\[0.4cm]       $\displaystyle{  \{ x^0,x^3\}=- z \cos(\ro x^0)\, \frac{\tanh(\ro x^1)}{\ro} \,  \frac{\sinh(\ro x^2)}{\ro} }$   \qquad
  $\displaystyle{   \{ x^1,x^2\}=z\,\frac{ \sin(\ro x^0) }{ \ro} \, \cosh(\ro x^2)\,\frac{\tanh(\ro x^3)}{\ro}  } $ \\[0.4cm]
    $\displaystyle{ \{ x^1,x^3\}=-z\,\frac{ \sin(\ro x^0)}{\ro}  \,\frac{\sinh(\ro x^2)}{\ro} }$\qquad
   $\{ x^2,x^3\}=0$  
   \\[0.4cm]
\hline
\\[-0.2cm]
  $\{ \s^0,\s^1\}=0$\qquad\  $\{ \s^0,\s^2\}=z \s^1 \s^3$\qquad\ $\{ \s^0,\s^3\}=-z \s^1 \s^2$ \\[0.3cm] 
   $\{ \s^2,\s^3\}=0$ \qquad
   $\{ \s^1,\s^2\}=z\s^0\s^3$  \qquad $\{ \s^1,\s^3\}=-z\s^0\s^2$
\\[0.25cm]
 \hline
\\[-0.2cm]
 {{\bf Type III}\quad  $r_{\rm III}=  z(K_1\wedge K_2+K_1\wedge J_3 )$}  \\[0.2cm]
\hline
\\[-0.2cm] 
 $\displaystyle{  \{ x^0,x^1\}=z \fun\cos(\ro x^0) \,\frac{\tanh(\ro x^2)}{\ro}}$\qquad    $\displaystyle{ \{ x^0,x^2\}=-z  \fun \cos(\ro x^0) \,\frac{\tanh(\ro x^1)}{\ro}} $     \\[0.3cm]
  $\displaystyle{ \{ x^1,x^2\}=-z\fun\,\frac{ \sin(\ro x^0) }{ \ro}} $  \qquad 
    $\displaystyle{ \{ x^3,x^\mu\}=0 }$ \\[0.4cm]
\hline
\\[-0.2cm]
  $\{ \s^0,\s^1\}=z(\s^0+\s^1) \s^2$\qquad     $\{ \s^0,\s^2\}=-z(\s^0+\s^1) \s^1$\qquad     $\{ \s^1,\s^2\}=-z(\s^0+\s^1) \s^0$  \qquad    $\{ \s^3,\s^\mu\}=0$  \\[0.25cm] 
\hline
\\[-0.2cm]
         $\displaystyle{  \fun := \fun(x^0,x^1)=  \frac{ \sin(\ro x^0) }{ \ro}\,   \cosh(\ro x^1)+ \frac{\sinh(\ro x^1)}{\ro}    }$ \qquad 
           $\displaystyle{  \funb := \funb(x^0,x^1)=\cosh(\ro x^1) + \sin(\ro x^0) \sinh(\ro x^1)   }$
\\[0.4cm]
\hline

\hline
\\[-0.2cm]
\end{tabular}
 \end{center}
}
 \end{table} 
 

\newpage

Although the (A)dS PHS in the local coordinates $x^\mu$ present involved expressions, which do    depend explicitly on the curvature/cosmological constant parameter $\ro$ (\ref{constant}), we stress that  these are   homogeneously quadratic and $\ro$-independent for the three types of PHS in the ambient coordinates $\s^\mu$. Consequently,  the expressions  for the PHS in  terms of  $\s^\mu$ are formally the same in   the (3+1) Beltrami projective variables $q^\mu$  (\ref{beltrami}) since the latter can  simply be obtained from the former dividing by $(\s^4)^2$ (see (\ref{s4commutes})).  

Furthermore,  all  Minkowski PHS can be derived straightforwardly via the limit $\eta\to 0$. Obviously, these   are  also homogeneous quadratic  in terms of Cartesian coordinates  
and, therefore,   quite different from the well-studied  $\kappa$-Minkowski spacetime~\cite{Maslanka1993},
\be
[\hat x^0,\hat x^a]=-\frac 1{\kappa}\, \hat x^a,\qquad [\hat x^a,\hat x^b]=0,
\label{kappaM}
\ee
and   from their Lie-algebraic generalizations obtained in~\cite{Daszkiewicz2008,BorowiecPachol2009jordanian,BP2014extendedkappa} (see also references therein). In this respect, it should be noted that, to the best of our knowledge, only two noncommutative quadratic Minkowski spacetimes have been considered so far: 
\begin{itemize}

\item The (3+1)D quantum Minkowski space constructed   in~\cite{Lukierski2006}  from a twisted Poincar\'e group  that corresponds to   type II.

\item 
The (2+1)D Poisson Minkowski spacetime of   case 1 in~\cite{PoinDD}, which was  obtained from a Drinfel'd double structure  of the (2+1)D  Poincar\'e group,   which can be identified with type III. 

\end{itemize}

We also recall that Lie-algebraic deformations for the (3+1)D and (2+1)D  Minkowski spaces can also be found in~\cite{Lukierski2006} and~\cite{PoinDD}, respectively.

In the following, we  carry out the quantization  for each class of   PHS, thus  giving rise to the corresponding noncommutative spacetimes. We    construct the  noncommutative   Minkowski spaces in quantum Cartesian coordinates $\hat x^\mu$ in an explicit manner.  From them, noncommutative (A)dS spacetimes  can be straightforwardly obtained in terms of the quantum spacetime coordinates  $\hat\s^\mu$, since they are defined formally as the same Poisson quadratic algebra as the corresponding Minkowski spacetimes.


\subsection{Type I spacetimes} 
\label{s51}

This class corresponds to a two-parametric family of PHS, with arbitrary deformation parameters $z$ and $z'$, and exhibit 
  complicated expressions. As a shorthand notation, we have introduced the functions $A(x^0,x^1)$ and $B(x^0,x^1)$ given in Table~\ref{table1}.

 The vanishing cosmological constant limit $\ro\to 0$ of the  (A)dS brackets in geodesic parallel coordinates $x^\mu$ gives the Minkowski PHS. Notice that the limit of the functions $A(x^0,x^1)$ and $B(x^0,x^1)$  reads
  \be
  \lim_{\ro\to 0}A= x^0+x^1,\qquad  \lim_{\ro\to 0}B=1.
    \label{limites}
\ee
Hence the explicit Minkowski PHS is defined by the following quadratic Poisson algebra:
\bea
&&\{x^0,x^1\}=z(x^0+x^1) x^2 , \qquad 
  \{ x^0,x^3\}=zx^2x^3 + z'   (x^0+x^1) x^2  , \nonumber\\[2pt]
&& \{ x^0,x^2\}=-z\bigl[(x^0+x^1) x^1+(x^3)^2  \bigr] -z'  (x^0+x^1) x^3  , \label{space1}\\[2pt]
&& \{ x^1,x^2\}=-z  \bigl[(x^0+x^1) x^0-(x^3)^2  \bigr] +z' (x^0+x^1) x^3  , \nonumber\\[2pt]
&&\{ x^1,x^3\}=-z x^2 x^3 -z' (x^0+x^1) x^2 ,\qquad \{ x^2,x^3\}=z  (x^0+x^1) x^3+z'   (x^0+x^1)^2  ,\nonumber
\eea
which is formally identical with the (A)dS expressions given in Table~\ref{table1} in ambient variables   $ s^\mu$  (\ref{bzz})  (recall that $\s^4$ does not appear in the Poisson brackets). The  quadratic Casimir  $\cas$  (\ref{cas})  for this Poisson algebra yields
  \be
   \cas=(x^0)^2 - (x^1)^2  -(x^2)^2 - (x^3)^2 ,
   \label{cas1}
   \ee
  for any $z$ and $z'$.  
 
 Although this  two-parametric family  is quite involved, it  can be regarded   as the superposition of two particular ``subfamilies" that we proceed to quantize separately.


\subsubsection{Subfamily with $z=0$} 
\label{s511}

The Minkowski PHS (\ref{space1}) with $z=0$ can be described in a natural way   by introducing null-plane coordinates~\cite{Leutwyler78}. We consider the null-plane $\nullp_n^\tau$ orthogonal to the light-like vector $n=(1,1,0,0)$ in the classical Minkowski spacetime with   Cartesian coordinates $x=(x^0,x^a)$  and we define
\be
x^+=x^0+x^1,\qquad x^-=n\cdot x=x^0-x^1 =\tau.
\label{nullplane}
\ee
A point $x\in \nullp_n^\tau$  is labelled by the coordinates $(x^+,x^2,x^3)$, while $x^-=\tau$ will play the role of a ``time" or evolution parameter, and the chosen null-plane is associated with the boost generator $K_1$. It can be checked that  under the action of the boost transformation generated by $K_1$ (\ref{eq:Gm}),   the initial null-plane $\nullp_n^0$ $(x^-=0)$ is invariant,   the transverse coordinates $(x^2,x^3)$ remain unchanged, and $\exp(\xi \,\rho(K_1))$ maps $x^+$ into ${\rm e}^{\xi} x^+$. For our purposes, let us also recall that the generators of the Poincar\'e algebra $\mf g_0$ (\ref{eq:ads_3+1}) can be casted into three different classes  according to their commutator with  $K_1$,
\be
[K_1,X]=\gamma X,\qquad X\in \mf g_0 ,
\label{goodness}
\ee
where the parameter $\gamma$ is called  the ``goodness" of the generator $X$~\cite{Leutwyler78}; these are
\be
\begin{array}{ll}
\mbox{$\gamma=+1$:}&\quad  P_+=P_0+P_1,\quad  K_2- J_3  ,\quad K_3+J_2. \\[2pt]
\mbox{$\gamma=0$:}&\quad   K_1,\quad J_1,\quad P_2,\quad P_3 . \\[2pt]
\mbox{$\gamma=-1$:}&\quad    P_-= P_0-P_1,\quad  K_2+ J_3  ,\quad K_3-J_2.
\end{array}
\label{goodness1}
\ee
From the representation (\ref{eq:repg}) it can be checked that the seven generators with $\gamma=+1$ and $\gamma=0$ span the stability group of the initial null-plane $\nullp_n^0$, while the three remaining ones with 
$\gamma=-1$ move $\nullp_n^0$. In particular,  the transformations generated by $(K_2+ J_3)$ and $(K_3-J_2)$ rotate $\nullp_n^0$, and $\exp(\tau\rho( P_-)$
translates   $\nullp_n^0$ into $\nullp_n^{2\tau}$. Hence the latter generators, which span an Abelian subgroup, determine the dynamical evolution of  $\nullp_n^0$ with time $x^-=\tau$.

The quantization of the Minkowski PHS (\ref{space1}) with $z=0$ can   be immediately realised in terms of the noncommutative coordinates
$(\hat x^\pm=\hat x^0 \pm \hat x^1, \hat x^2,\hat x^3)$ since the   quantum coordinate $\hat x^+$ becomes a central element; namely in such null-plane coordinates this noncommutative spacetime reads
\be 
  [\hat x^-, \hat x^2]= -2 z' \hat x^+ \hat x^3  , \qquad
[\hat x^-,\hat x^3]=  2z' \hat x^+ \hat x^2 ,\qquad [\hat x^2,\hat x^3]= z'   (\hat x^+)^2 , \qquad
[\hat x^+, \, \cdot\,]=0 .  
 \label{nullspace1}
 \ee
 And  the quantum counterpart of the Casimir (\ref{cas1}) is
\be
  \hat\cas=\hat x^-  \hat x^+  -(\hat x^2)^2 - (\hat x^3)^2 .
\label{casN}
\ee
   
At this point, it is worth   calculating the cocommutator $\delta_{z'}$ coming from $r_{\rm I}$ (\ref{rmatrices}) with $z=0$ through the relation (\ref{eq:deltar})
and analyse its structure in relation with the above null-plane framework. Explicitly, $\delta_{z'}$ is given by:
 \bea
&&\delta_{z'}(P_0)= z' P_2\wedge (K_3-J_2) - z' P_3\wedge (K_2+J_3),\nonumber\\[2pt]
&&\delta_{z'}(P_1)= z' P_2\wedge (K_3-J_2) - z' P_3\wedge (K_2+J_3),\nonumber\\[2pt]
&&\delta_{z'}(P_2)= z' (P_0-P_1) \wedge (K_3-J_2)  ,\nonumber\\[2pt]
&&\delta_{z'}(P_3)= -z' (P_0-P_1)\wedge (K_2+J_3)  ,\nonumber\\[2pt]
&&\delta_{z'}(K_1)=  2 z'  (K_2+J_3)   \wedge (K_3-J_2) ,\label{coco1}\\[2pt]
&&\delta_{z'}(K_2)= -z' K_1\wedge (K_3-J_2) - z' J_1\wedge (K_2+J_3),\nonumber\\[2pt]
&&\delta_{z'}(K_3)=  z' K_1\wedge  (K_2+J_3)- z' J_1\wedge  (K_3-J_2) ,\nonumber\\[2pt]
&&\delta_{z'}(J_1)= 0 ,\nonumber\\[2pt]
&&\delta_{z'}(J_2)=  z' K_1\wedge (K_2+J_3) -z' J_1\wedge (K_3-J_2) ,\nonumber\\[2pt]
&&\delta_{z'}(J_3)= z' K_1\wedge  (K_3-J_2)+ z' J_1\wedge  (K_2+J_3) .\nonumber 
   \eea
 These expressions clearly exhibit the underlying null-plane symmetry determined by the boost generator $K_1$ according to the goodness $\gamma$ (\ref{goodness1}). It can be checked that, besides $J_1$, the three  generators with $\gamma=-1$, have a vanishing cocommutator:
 \be
 \delta_{z'}(P_-)=\delta_{z'}( K_2+ J_3 )=\delta_{z'}(K_3-J_2 )=0  .
 \ee
Moreover, in this null-plane basis   $r_{\rm I}$ (\ref{rmatrices}) with $z=0$ adopts the simple form
\be
r_{\rm I}=  z'  (K_3-J_2) \wedge  (K_2+J_3).
\label{r11}
\ee
Consequently, $r_{\rm I}$ corresponds to 
a Reshetikhin twist generated by two commuting operators, which means that  the quantum Poincar\'e algebra $U_{z'}(\mf g_0)$ can be easily constructed.
 It can also be  checked from $\delta_{z'}$ (\ref{coco1}) that 
 \be
    \delta_{z'}(\mf t)\subset \mf t \wedge \mf h,\qquad   \delta_{z'}(\mf h)\subset \mf h \wedge \mf h \ne 0 ,
    \label{xa}
\ee
as it should be. In addition, the relations (\ref{xa}) mean that    $\delta_{z'}$ does not contain any term  in $\mf t \wedge \mf t$, i.e.~$P_\mu\wedge P_\nu$, and by quantum duality  this fact implies that all the commutators $[\hat x^\mu,\hat x^\nu]$ vanish at the first-order in the quantum coordinates $\hat x^\mu$, although  higher-order terms could exist.  The latter is exactly the case here with the noncommutative Minkowski space (\ref{nullspace1}) which is defined by a homogeneous quadratic algebra. We stress that the relations  (\ref{xa}) and so the first-order brackets $[\hat x^\mu,\hat x^\nu]=0$ are satisfied by the three types I, II and III of deformations and hold for any value of $\L$.


\subsubsection{Subfamily  with $z'=0$} 
\label{s512}

We now consider the Minkowski PHS (\ref{space1}) with $z'=0$ together with the same null-plane coordinates  (\ref{nullplane}) associated with $K_1$. 
By taking the ordered monomials $(\hat x^-)^k(\hat x^+)^l(\hat x^3)^m(\hat x^2)^n$ the corresponding  noncommutative Minkowski space is found  to be
\be
\begin{array}{lll}
  [\hat x^-, \hat x^+]=2 z   \hat x^+\hat x^2, &\quad  [\hat x^-,\hat x^2]=z\hat x^-  \hat x^+  - 2 z  ( \hat x^3)^2,&\quad    
 [\hat x^-, \hat x^3]=2 z   \hat x^3\hat x^2,    \\[2pt]
 [\hat x^2, \hat x^3]=z  \hat x^+\hat x^3 ,   &\quad   [\hat x^+, \hat x^2]=- z   (\hat x^+)^2  ,&\quad    [\hat x^+, \hat x^3]=0     \, , 
\end{array}
\label{nullspace2}
\ee
 where associativity is ensured by the Jacobi identity.  Although the resulting expressions are more complicated than in the previous case,   the     quantum null-plane coordinates $(\hat x^+, \hat x^2,\hat x^3)$ again close a subalgebra and the quantum  Casimir (\ref{casN}) is the same.

 The cocommutator $\delta_{z}$ can be obtained from   $r_{\rm I}$ (\ref{rmatrices}) with $z'=0$ applying (\ref{eq:deltar}) and its structure turns out to be naturally adapted to the null-plane Poincar\'e generators   (\ref{goodness1}). However, the generators $J_1$ and $P_-$ are no longer primitive and   only  $(K_2+ J_3)$ and  $(K_3-J_2)$ have a vanishing cocommutator. Again  $r_{\rm I}$ (\ref{rmatrices}) with $z'=0$ has a simple expression in this null-plane basis:
\be
r_{\rm I}= z K_1 \wedge (K_2+J_3) + z J_1 \wedge  (K_3-J_2) .
\label{r12}
\ee
    Observe that this $r$-matrix is not defined through Reshetikhin twists and therefore is a more complicated solution of the CYBE than the previous case.

    Concerning the ``superposition" of these two subfamilies, it is clear that the two-parametric noncommutative Minkowski space   just comes out by considering  altogether (\ref{nullspace1}) and (\ref{nullspace2}) since $z$ and $z'$ are arbitrary and we have considered the same quantum coordinates for both spaces. Notice also that the order in (\ref{nullspace1})  is trivially compatible with that of  (\ref{nullspace2}). The complete $r_{\rm I}$ is just the addition of (\ref{r11}) and (\ref{r12}), which would further lead to a new  two-parametric null-plane quantum Poincar\'e  algebra $U_{z,z'}({\mf g}_0)$ determined by the two commuting generators $(K_2+ J_3)$ and  $(K_3-J_2)$. 
    
    In this respect,  it is worth stressing that $U_{z,z'}({\mf g}_0)$
           would be quite different from the known null-plane quantum Poincar\'e  algebra formerly obtained in~\cite{nullplane95}, which is determined by the generator $P_+$. In order to compare both quantum deformations,  we recall that the     noncommutative Minkowski spacetime for the latter
was obtained in~\cite{Rnullplane97} from its universal quantum $R$-matrix and in the  above null-plane basis (ruled now by the generator $K_1$ instead of   $K_3$~\cite{Rnullplane97}). This quantum spacetime has non-vanishing commutators given by
 \be
[\hat x^+,\hat x^-]=- 2 z \hat x^-,\qquad [\hat x^+,\hat x^2]=- 2 z \hat x^2 ,\qquad [\hat x^+,\hat x^3]=- 2 z \hat x^3  .
\label{nulloriginal}
\ee
Similarly to the $\kappa$-Minkowski space (\ref{kappaM}), this null-plane noncommutative Minkowski space  is also a  linear-algebraic deformation and 
for which a quadratic quantum Casimir (\ref{casN}) does not exist. For linear-algebraic generalizations of (\ref{nulloriginal}) we refer to~\cite{BP2014extendedkappa}.

Concerning the corresponding noncommutative (A)dS spacetimes, they can  be straightforwardly obtained in terms of the ambient quantum variables $(\hat s^\pm=\hat\s^0\pm \hat\s^1,\hat\s^2,\hat\s^3)$, whose commutators are given by expressions of the same form as   (\ref{nullspace1}) and  (\ref{nullspace2}).    We  also point out that  the cocommutator $\delta_{z,z'}$ obtained from $r_{\rm I}$ is $\L$-independent, and thus holds for the whole family ${\mf g}_\L$,  so that the quantum  algebra  $U_{z,z'}({\mf g}_\L)$ would  be endowed with  a common coproduct   for the (A)dS and Poincar\'e algebras with primitive generators $(K_2+ J_3)$ and  $(K_3-J_2)$. Obviously, the commutation rules for the quantum (A)dS algebras will be a $\Lambda$-deformation of the corresponding Poincar\'e ones.


\subsection{Type II spacetimes} 
\label{s52}

The limit $\ro\to 0$ of the (A)dS PHS of type II in coordinates $x^\mu$ given in Table~\ref{table1} leads to the following Minkowski PHS:
\be
\begin{array}{lll}
 \{ x^0,x^1\}=0 , &\quad   \{ x^0,x^2\}=z x^1 x^3 ,&\quad    
\{ x^0,x^3\}=-z x^1 x^2 ,    \\[2pt]
\{ x^2,x^3\}=0 ,&\quad  \{ x^1,x^2\}=zx^0x^3   ,&\quad   \{ x^1,x^3\}=-zx^0x^2    .  
\end{array}
\label{space2}
\ee
 The noncommutative Minkowski spacetime is  directly deduced   by 
 considering the ordered monomials $(\hat x^0)^k(\hat x^1)^l(\hat x^2)^m(\hat x^3)^n$ and reads   
\be
\begin{array}{lll}
[ \hat x^0, \hat x^1 ]=0 , &\quad  [\hat x^0, \hat x^2] =z \hat x^1 \hat x^3 ,&\quad    
[ \hat x^0,\hat x^3] =-z\hat  x^1 \hat x^2 ,    \\[2pt]
[ \hat x^2,\hat x^3] =0 ,&\quad [ \hat x^1,\hat x^2] =z\hat x^0 \hat x^3   ,&\quad  [\hat x^1,\hat x^3]=-z\hat x^0\hat x^2    .  
\end{array}
\label{ncspace2}
\ee
This structure can be regarded as a nonlinear generalization of two coupled 2D quadratic Euclidean algebras, for  which $\hat x^0$ and 
$\hat x^1$ play the role of rotation operators on  the 2-vector $(\hat x^2,\hat x^3)$. The behaviour of the    quantum coordinate pairs  $(\hat x^0,\hat x^1)$ and $(\hat x^2,\hat x^3)$ is  somehow inherited from  the   Reshetikhin twist defined by $r_{\rm II}$   (\ref{rmatrices}) which is given by the commuting  generators $K_1$ and $J_1$. 
It can be checked that these are the only primitive generators within the Lie bialgebra $(\mf g_\ka,\delta_{\rm II})$. 

We remark that (\ref{ncspace2}) corresponds to the   
 quadratic noncommutative Minkowski space constructed in~\cite{Lukierski2006} by following a different approach which consists of starting  with a representation of the universal quantum $R$-matrix for the twisted quantum Poincar\'e group and then applying the FRT procedure.

Note also that the  obtention of the corresponding quantum (A)dS  algebra $U_z(\mf g_\L)$ is straightforward via the Reshetikhin twist associated to $r_{\rm II}$.


\subsection{Type III spacetimes} 
\label{s53}

The third type of  PHS are written in  Table~\ref{table1} in local coordinates $x^\mu$ by using the function $A(x^0,x^1)$ as a shorthand notation. 
Before performing its quantization, let us relate these (A)dS PHS with $\ro\ne 0$  to some results already obtained in the literature.

Since $x^3$ Poisson-commutes with all the remaining coordinates,  we are dealing in fact with  a (2+1)D (A)dS PHS which can be expressed  in a more symmetric manner as
\be
 \{ x^0,x^1\}=z   \,\frac{\tanh(\ro x^2)}{\ro} \,\func,\quad\  \{ x^0,x^2\}=-z  \,\frac{\tanh(\ro x^1)}{\ro}\,\func
  ,\quad\  \{ x^1,x^2\}=-z\,\frac{ \tan(\ro x^0) }{ \ro}\,\func    ,
  \label{xb}
\ee
where we have introduced the function  $\func= \cos(\ro x^0)A(x^0,x^1)$:
\be
 \func(x^0,x^1)=\cos(\ro x^0)\left(  \frac{ \sin(\ro x^0) }{ \ro} \cosh(\ro x^1)+   \frac{\sinh(\ro x^1)}{\ro} \right) .
  \label{xc}
\ee
 In this form, the PHS (\ref{xb}) deeply resembles the (2+1)D AdS PHS  constructed  in~\cite{BHM2014plb}  by considering the AdS Lie algebra $\mf{so}(2,2)$ as a Drinfel'd double arising from the Drinfel'd--Jimbo deformation of $\mf{sl}(2,\mathbb R)$ and taking into account the results formerly obtained in~\cite{BHM2010plb}. In the notation of~\cite{BHM2014plb} the   AdS PHS reads
\be
 \{ x^0,x^1\}=-\xi   \,\frac{\tanh(\ro x^2)}{\ro} \,\fund,\qquad \{ x^0,x^2\}=\xi  \,\frac{\tanh(\ro x^1)}{\ro}\,\fund
  ,\qquad \{ x^1,x^2\}=\xi\,\frac{ \tan(\ro x^0) }{ \ro}\,\fund    .
  \label{xd}
\ee
such that $\xi$ is the deformation parameter, $\eta^2=-\L>0$, and  \be
 \fund(x^0,x^1)=\cos(\ro x^0)\left(   { \cos(\ro x^0) } \cosh(\ro x^1)+    {\sinh(\ro x^1)}  \right) .
  \label{xe}
\ee
The  apparent similarity between (\ref{xb}) and (\ref{xd}) disappears when both PHS  are expressed in ambient coordinates $\s^\mu$,  and this fact becomes more evident when the Minkowski limit $\eta\to 0$ is calculated: in that case $\func\to(x^0+x^1)$  (see (\ref{limites})), while $\fund\to 1$.
 In particular, the Minkowski PHS obtained from (\ref{xd})  is given by~\cite{BHM2014plb}
\be
 \{ x^0,x^1\}=-\xi  x^2 ,\qquad \{ x^0,x^2\}=\xi   x^1 
  ,\qquad \{ x^1,x^2\}=\xi x^0  .
  \label{xf}
\ee
Hence this structure provides a Lie-algebraic deformation of the (2+1)D Minkowski space which corresponds to the Lie algebra $\mf {so}(2,1)$. We remark that such a noncomutative Minkowski space was formerly considered  in~\cite{Matschull1998} in a (2+1) quantum gravity framework. This also appears as the case 0 in the classification of  Drinfel'd double structures for the  (2+1)D Poincar\'e   group performed  in~\cite{PoinDD}. 
The noncomutative Minkowski space $\mf {so}(2,1)\simeq \mf{sl}(2,\mathbb R)$ (\ref{xf}), linked to  the  (2+1)D Poincar\'e group, together with its Euclidean counterpart 
corresponding to    the Lie algebra $\mf {so}(3)\simeq \mf{su}(2)$,  associated with  the  3D Euclidean  group,   have   been extensively studied in (2+1)-gravity~\cite{Matschull1998,Bais1998,Bais2002,Batista2003,Majid2005,Joung2009} (see also references therein).

  In contrast to (\ref{xf}),   the limit $\eta\to 0$ of the (A)dS PHS (\ref{xb})  gives the following quadratic (2+1)D Minkowski PHS  
  \be
 \{ x^0,x^1\}=z(x^0+x^1) x^2 ,  \qquad   \{ x^0,x^2\}=-z(x^0+x^1) x^1 ,   \qquad
\{ x^1,x^2\}=-z(x^0+x^1)x^0        ,
 \label{space3}
\ee
  which is naturally adapted to   the null-plane description associated with $K_1$ that was presented in Section~\ref{s51}. 
This Poisson structure has been obtained previously within the case 1 of the classification of (2+1)D Poincar\'e Drinfel'd double structures    in~\cite{PoinDD}. 
  
If we now consider the noncommutative coordinates
$(\hat x^\pm=\hat x^0 \pm \hat x^1, \hat x^2,\hat x^3)$ and take the ordered monomials $(\hat x^-)^k(\hat x^+)^l (\hat x^2)^n$ we get the   noncommutative Minkowski spacetime  given by
 \be
 [\hat x^2, \hat x^+]=z (\hat x^+)^2 ,  \qquad  [\hat x^2,\hat x^-]=-z \hat x^- \hat x^+ ,    \qquad
[\hat  x^-,\hat x^+]=2 z \hat x^+  \hat  x^2      , \qquad  [\hat  x^3,\hat x^\mu]=0    \, .
\ \label{ncspace3}
\ee
The quantization   of the Casimir (\ref{cas1})  reads
\be
  \hat\cas=\hat x^-  \hat x^+  -(\hat x^2)^2  ,
\label{casN3}
\ee
which is the (2+1)D  version of (\ref{casN}), where obviously one could trivially add  the central term $(\hat x^3)^2$. Hence $  \hat\cas$  is related     to a (1+1)D  dS space  associated with the Lorentz algebra $\mathfrak{so}(2,1)$ in agreement with the (2+1)D character of this deformation.  The primitive generators with vanishing cocommutator  of the Lie bialgebra $(\mf g_\L,\delta_{\rm III})$, determined by $r_{\rm III}$   (\ref{rmatrices}),
are $(K_2+J_3)$ and, as expected, $P_3$.

   Finally,  we stress that  the quantum deformation determined by $r_{\rm III}$ (\ref{rmatrices}),  coming from   the lower dimensional  Lorentz subalgebra $\mf {so}(2,1)$ spanned by $\{J_3,K_1,K_2\}$ (see Section~\ref{s41}),   can be related to   the well-known nonstandard (or Jordanian) quantum deformation of $\mf {sl}(2,\mathbb R)\simeq \mf {so}(2,1)$~\cite{Ohn,Ogievetsky,BHOS1995,BH1996,Shariati}. Explicitly,  if we define   new generators in such subalgebra of $\mf g_\L$  as
\be
A=-2 K_1,\qquad A_\pm=K_2\pm J_3,
\ee
we find that they close the commutation relations of $\mf {sl}(2,\mathbb R)$ while $r_{\rm III}$   (\ref{rmatrices}) turns out to be   the so-called Jordanian twist for  $\mf{sl}(2,\mathbb R)$~\cite{Ogievetsky}, namely
\be
[A,A_\pm]=\pm 2 A_\pm,\qquad [A_+,A_-]= A,\qquad r_{\rm III}=-\frac z2 A\wedge A_+.
\ee
In this respect,  it is worth noting that the nonstandard quantum algebra $U_z(\mf {sl}(2,\mathbb R))$ has already been considered in~\cite{Herranz2000,Herranz2002} as the cornerstone 
   to obtain higher-dimensional quantum (A)dS  algebras by imposing to keep $U_z(\mf {sl}(2,\mathbb R))$ as a Hopf subalgebra, which is exactly the situation here. Such results were deduced by working in a conformal basis, and they would provide the complete quantum algebra   $U_z(\mf g_\L)$  through an appropriate change of basis from the conformal  to the kinematical one.

\sect{Noncommutative Newtonian and Carrollian  spacetimes}
\label{s6}

So far we have obtained   all the  PHS   $(G_\L/H, \pi )$ such that the Lorentz subgroup $H$ is a PL  subgroup of $(G_\L,\Pi)$ and then we have constructed the  corresponding noncommutative (A)dS and Minkowski spacetimes. In this context it is rather natural to analyse which are their non-relativistic and ultra-relativistic limits leading to noncommutative Newtonian and Carrollian spacetimes, respectively. Recall that under such limits  the Lorentz     subgroup $H$  becomes isomorphic to the 3D Euclidean group ${\rm ISO}(3)$.  In what follows we study these two limits separately.


\subsection{Non-relativistic limit}

We consider the commutation relations of the  family $\mf g_\L$ \eqref{eq:ads_3+1}, we apply the map defined by
\be
\>P\to  c^{-1}  \>P,\qquad  \>K\to  c^{-1}  \>K ,
\label{z4}
\ee
where $c$ is the speed of light, and then take the limit $c\to \infty$    obtaining the   Lie brackets given by
\bea
\begin{array}{lll}
[J_a,J_b]=\epsilon_{abc}J_c ,& \quad [J_a,P_b]=\epsilon_{abc}P_c , &\quad
[J_a,K_b]=\epsilon_{abc}K_c , \\[2pt]
\displaystyle{
  [K_a,P_0]=P_a  } , &\quad\displaystyle{[K_a,P_b]=0} ,    &\quad\displaystyle{[K_a,K_b]=0} , 
\\[2pt][P_0,P_a]=-\ka  K_a , &\quad   [P_a,P_b]=0 , &\quad[P_0,J_a]=0  .
\end{array}
\label{z5}
\eea
The three contracted Lie algebras constitute  the family of non-relativistic or Newtonian Lie algebras   $\nh_\ka$: 
   expanding Newton--Hooke (NH)   $\nh_+$ for $\ka>0$,   oscillating NH   $\nh_-$  for $\ka< 0$ and   
  Galilei   $\nh_0$ for $\ka=0$~\cite{BGH2020snyder,Bacry1968,HS1997casimir,Aldrovandi1999,HerranzSantander2008,Duval2014,Figueroa-OFarrill2018,Gomis2020a,BGGH2020kappanewtoncarroll}.

The three associated  (3+1)D Newtonian spacetimes are constructed as the coset spaces
\be
 {\rm N}_\ka/H ,\qquad H= {\rm ISO}(3)= \langle  \>K,\>J \rangle ,
\label{z6}
\ee 
where $ {\rm N}_\ka$ is the Lie group with Lie algebra  $\nh_\ka$ and $H$ is the isotropy subgroup  of rotations and (commuting) Newtonian boosts, which is isomorphic to the 3D Euclidean group ${\rm ISO}(3)$ (instead of the Lorentz one ${\rm SO}(3,1)$).  These Newtonian spacetimes are also  of constant sectional curvature equal to $\ro^2=-\ka$ but 
the time-like metric becomes degenerate providing   ``absolute-time". Hence there arises an invariant foliation under the    Newtonian group action whose leaves are defined by a constant time which is determined by a ``subsidiary" 3D non-degenerate 
  Euclidean spatial metric $g'$ restricted to each leaf of the foliation  (see, e.g.~\cite{BGH2020snyder} for explicit metric models).

By using   Lie group duality, it is found that the contraction  map (\ref{z4}) leads to the following contraction map   for     geodesic parallel and ambient coordinates
\bea
&& x^0\to x^0,\qquad x^a\to c \, x^a,\nonumber\\
&&  \s^0\to \s^0,\qquad \s^a\to c \, \s^a ,\qquad s^4\to s^4.
\label{z7}
\eea
By introducing these maps into \eqref{ambientspacecoords} and next applying the limit $c\to \infty$ we obtain the corresponding relations between Newtonian geodesic parallel and ambient coordinates:
\be 
 \s^4=\cos (\ro x^0)   , \qquad 
 \s^0=\frac {\sin( \ro x^0)}\ro   , \qquad
 \s^a = x^a  ,
 \label{z8}
\ee
such that
\be
 ( s^4)^2 +\ro^2  (\s^0)^2 =1,
\ee
to be compared to (\ref{pseudo}). In the flat Galilei space both sets of coordinates coincide: $s^\mu\equiv x^\mu$. The   main  time-like metric  and the subsidiary 3D non-degenerate 
  Euclidean spatial metric  $g'$ restricted to each leaf of the foliation turn out to be~\cite{BGH2020snyder}
\bea
&&\dd\sigma_\ka^2 =   \frac{  ({{\rm d}} \s^0)^2} {1 + \ka (\s^0)^2  } = (\dd x^0)^2,\label{z9}\\[2pt] 
&&\dd\sigma'^2  =  (\dd s^1)^2 +( \dd s^2)^2+ (\dd s^3)^2 =  (\dd x^1)^2 +( \dd x^2)^2+ (\dd x^3)^2\quad \mbox{with}\  \s^0,  x^0\  \mbox{constants}.
\nonumber
\eea

With all of these ingredients we analyse the contractions of the three types of  Lie bialgebras  determined by Proposition \ref{prop:rPHS}. We apply  the Lie bialgebra contraction (LBC) approach introduced in~\cite{BGHOS1995quasiorthogonal}: starting from a given Lie algebra contraction, here (\ref{z4}),  the   transformation of the quantum deformation parameter ($z$ and $z'$) that ensures a 
well-defined and non-trivial Lie bialgebra structure after  contraction has to be found. Such a transformation should be studied for $r$ and $\delta$ separately, since they could behave differently. In our case, the transformation of the deformation parameter is exactly the same for the three types of $r$ (\ref{rmatrices}) and their cocommutators $\delta$ (\ref{eq:deltar}).  This means that we  obtain   fundamental and coboundary LBC in such a manner that the contracted  $\delta$ coincides with the one provided by the contracted $r$ through the relation (\ref{eq:deltar}).
The resulting LBC are given by
\bea
\begin{array}{lll}
  \mbox{Type I:}&\quad z\to c^2 \,z , &\quad z'\to c^2 z' , \qquad   r_{\rm I}= zK_1\wedge K_2-z'K_2\wedge K_3. \\[2pt]
  \mbox{Type II:} &\quad z\to c \,z ,&\quad r_{\rm II}= zK_1\wedge J_1. \\[2pt]
 \mbox{Type III:}&\quad z\to c^2 \,z ,&\quad r_{\rm III}= zK_1\wedge K_2. 
\end{array}
\label{z10}
\eea
Hence the Newtonian deformation of   type III is  the particular case of   type I with $z'=0$. Therefore, all of them are just  twisted  Reshetikhin deformations (the boosts now commute) and the type II is the only one that remains invariant under contraction with respect to Proposition \ref{prop:rPHS}.

  
\begin{table}[t!]
{\small
\caption{\small   Noncommutative  Newtonian and Carrollian   spaces with quantum Euclidean subgroups  in terms of  quantum   $\hat x^\mu$ and   $ \hat \s^\mu$  spacetime coordinates  (\ref{z8}) and   (\ref{u5})  coming via contraction from the  noncommutative    (A)dS and Minkowski   spaces with quantum Lorentz subgroups.}
\label{table2}
  \begin{center}
\noindent 
\begin{tabular}{l  }
\hline

\hline
\\[-0.2cm]
 {Noncommutative Newtonian   spacetimes} \\[0.2cm]
\hline
\\[-0.2cm] 
  $\bullet$ Type I:\quad  $ r_{\rm I}= zK_1\wedge K_2-z'K_2\wedge K_3$ \\[0.15cm]
  $[\hat x^0,\hat x^a]=0$\qquad\!$\displaystyle{ [\hat x^1,\hat x^2]=-z\,\frac{\sin^2(\ro\hat x^0)}{\ro^2}}$ \qquad  $ [\hat x^1,\hat x^3]= 0$\qquad    $\displaystyle{[\hat x^2,\hat x^3]=z'\,\frac{\sin^2(\ro \hat x^0)}{\ro^2}}$  \\[0.3cm]
  $[\hat \s^0,\hat \s^a]=0$\qquad   $[\hat \s^1,\hat\s^2]=-z(\hat\s^0)^2$\qquad   $[\hat \s^1,\hat\s^3]=0$ \qquad   $[\hat\s^2,\hat\s^3]=z'(\hat\s^0)^2$  \\[0.25cm]
\hline
\\[-0.2cm]
  $\bullet$ Type II:\quad   $r_{\rm II}=  zK_1\wedge J_1$ \\[0.15cm]
 $[\hat  x^0,\hat x^a]=0$\qquad   $\displaystyle{ [\hat  x^1,\hat x^2]=z\,\frac{\sin(\ro \hat x^0)}{\ro}\, \hat x^3}$ \qquad   $\displaystyle{ [\hat  x^1,\hat x^3]=-z\,\frac{\sin(\ro \hat x^0)}{\ro}\, \hat x^2}$\qquad  $[\hat  x^2,\hat x^3]=0$  \\[0.3cm]
  $[\hat  \s^0,\hat \s^a]=0$\qquad\,$[\hat  \s^1,\hat \s^2]=z\hat\s^0\hat\s^3$ \qquad   $[\hat  \s^1,\hat \s^3]=-z\hat\s^0\hat\s^2$  \qquad   $[\hat  \s^2,\hat\s^3]=0$ \qquad 
 $ \hat \cas =  (\hat\s^2)^2 + (\hat\s^3)^2  $\\[0.25cm]

\hline
\\[-0.2cm]
 {Noncommutative Carrollian spacetimes} \\[0.2cm]
\hline
\\[-0.2cm]  
   $\bullet$ Type I:\quad $ r_{\rm I}= zK_1\wedge K_2-z'K_2\wedge K_3$ \\[0.2cm]
  $[\hat x^\mu,\hat x^\nu]=[\hat  \s^\mu,\hat\s^\nu]=0$ \\[0.25cm]

\hline
\\[-0.2cm]

 $\bullet$ Type II:\quad $r_{\rm II}=  zK_1\wedge J_1$ \\[0.2cm]
 $[\hat  x^0,\hat x^1]=0$\qquad\!$\displaystyle{ [\hat  x^0,\hat x^2]=z\,\frac{\tanh(\ro\hat  x^1)}{\ro}\, \cosh(\ro \hat x^2)\,   \frac{\tanh(\ro \hat x^3)}{\ro}}$   \qquad $\displaystyle{ [\hat  x^0,\hat x^3]=-z\,\frac{\tanh(\ro\hat  x^1)}{\ro}\,     \frac{\sinh(\ro \hat x^2)}{\ro}}$   \\[0.3cm]    
  $[\hat  \s^0,\hat \s^1]=0$\qquad  $[\hat  \s^0,\hat \s^2]=z\hat \s^1\hat \s^3$ \qquad  $[\hat  \s^0,\hat \s^3]=-z\hat \s^1\hat \s^2$ \qquad   $[\hat  \s^a,\hat \s^b]=[\hat  x^a,\hat x^b]=0$\qquad $  \cas= (\hat \s^2)^2 + (\hat \s^3)^2  $\\[0.25cm]
\hline

\hline
\end{tabular}
 \end{center}
}
 \end{table} 
 

 The Newtonian  PHS in local and ambient coordinates (\ref{z8}) can    be deduced either by contracting the (A)dS and Minkowski    PHS given in Table~\ref{table1} by applying the maps (\ref{z7}) and (\ref{z10}),   or by computing the left- and right-invariant vector fields and then using the Sklyanin bivector (\ref{eq:sklyaninbiv}) with   $r_{\rm I}$ and $r_{\rm II}$. The resulting   PHS  can be directly quantized since there are no ordering ambiguities. Obviously,   the contractions can also be applied to the noncommutative spaces obtained  in Sections~\ref{s51} and \ref{s52}, but observe that for the type I deformation this should be done by working  in  the basis  $\hat x^\mu$  instead of the null-plane basis, since the latter is not applicable in the non-relativistic spaces.
 Such  quantum Newtonian spaces are presented in Table~\ref{table2}.

     It is worth   stressing that  the quantum time coordinate $\hat s^0$ becomes a central element (and so $\hat x^0$ as well) reminding the ``time-absolute" character of the Newtonian spaces. This fact is consistent with  the contraction (\ref{z7}) of the Casimir (\ref{cas}) which reduces to $\cas=(\s^0)^2$.
     However the noncommutative spaces do not split into    $\hat s^0$ plus  a quantum  3-space subalgebra $\hat s^a$, since the latter involves $\hat s^0$.
     
     Let us briefly comment on the results presented in Table~\ref{table2} by writing the explicit noncommutative spaces for the Galilei case corresponding to the limit $\eta\to 0$, such that $\hat s^\mu\equiv \hat x^\mu$. The non-vanishing commutators of  the type I space  read
     \be
 [\hat x^1,\hat x^2]=-z (\hat x^0)^2 ,  \qquad [\hat x^2,\hat x^3]=z'\,(\hat x^0)^2 ,
 \label{z11}
  \ee
     which, surprisingly enough,  can be regarded as   two coupled Heisenberg--Weyl algebras sharing the quantum space coordinate $\hat x^2$, and whose   central element is determined by the square of the quantum time coordinate $\hat x^0$.  For each subfamily with  either $z$ or $z'$ equal to zero the coupling disappears leaving a   
single Heisenberg--Weyl algebra. Thus (\ref{z11}) provides  the  non-relativistic  reminiscences of the null-plane noncommutative Minkowski spaces   (\ref{nullspace1}) and (\ref{nullspace2}).

  The noncommutative Galilei space of type II   is given by
\be
[\hat  x^1,\hat x^2]=z\hat x^0\hat x^3 , \qquad    [\hat  x^1,\hat x^3]=-z\hat x^0\hat x^2 ,  \qquad    [\hat  x^2,\hat x^3]=0 ,
 \label{z12}
  \ee
which for a given eigenvalue of $\hat x^0$ can be interpreted as a quadratic 2D Euclidean algebra with  $\hat x^1$  playing the role of a rotation   on the quantum  2-space $(\hat x^2, \hat x^3)$. Hence  the quadratic  Casimir for this space is given by
\be
 \hat \cas =  (\hat x^2)^2 + (\hat x^3)^2.
 \ee
  Thus, under the non-relativistic limit,  the   noncommutative Minkowski spacetime (\ref{ncspace2}) is decoupled at the central quantum time coordinate $\hat x^0$ and  the noncommutative space  (\ref{z12}).

 Finally, regarding the   Newtonian Lie bialgebras  determined by  (\ref{z10}) note that, as a result of our approach,  both types I and II have a non-trivial Euclidean sub-Lie bialgebra $\mf h = \spn{ \{\mathbf{K}, \mathbf{J} \}}=\mathfrak{iso}(3)$, i.e.\ $\delta(\mf h)=\mf h \wedge \mf h\ne 0$.    For the two-parametric Lie bialgebra of  type I  $(\nh_\ka,\delta_{\rm I})$ the primitive generators are $\>P$ and $\>K$, while for the type II $(\nh_\ka,\delta_{\rm II})$ these are $\{K_1,J_1 ,P_1\}$. Their quantum algebras   can be straightforwardly constructed via  the corresponding  Reshetikhin twists.


\subsection{Ultra-relativistic limit}

We now introduce the    speed of light $c$  in the family   $\mf g_\L$  \eqref{eq:ads_3+1} through the map~\cite{Bacry1968,Levy-Leblond1965}
\be
P_0\to  c\, P_0,\qquad  \>K\to    c\, \>K.
\label{u1}
\ee
The  ultra-relativistic limit $c\to 0$ gives rise to  the contracted commutation relations 
\bea
\begin{array}{lll}
[J_a,J_b]=\epsilon_{abc}J_c ,& \quad [J_a,P_b]=\epsilon_{abc}P_c , &\quad
[J_a,K_b]=\epsilon_{abc}K_c , \\[2pt]
\displaystyle{
  [K_a,P_0]=0 } , &\quad\displaystyle{[K_a,P_b]=\delta_{ab} P_0} ,    &\quad\displaystyle{[K_a,K_b]= 0} , 
\\[2pt]
[P_0,P_a]=-\ka   K_a , &\quad   [P_a,P_b]=\ka  \epsilon_{abc}J_c , &\quad[P_0,J_a]=0  ,
\end{array}
\label{u2}
\eea
  corresponding to the family of three Carrollian  algebras $\ca_\ka$. This  comprises  the   para-Euclidean     $\ca_+\simeq \mathfrak{iso}(4)$ for $\ka>0$,  the
para-Poincar\'e    $\ca_- \simeq \mathfrak{iso}(3,1)$ for $\ka<0$, and the (proper) Carroll algebra $\ca_0$ for $\ka=0$~
\cite{Bacry1968,BGH2020snyder,Duval2014,Figueroa-OFarrill2018,Gomis2020a,BGGH2020kappanewtoncarroll,Levy-Leblond1965,Bergshoeff2014,KT2014carroll,Hartong2015,CGP2016carroll,BGRRV2017carrollgalilei,Tzesniewski2018,Daszkiewicz2019,Marsot2021}.

The three (3+1)D Carrollian spacetimes  are  identified with the coset spaces
\be
  {\rm C}_\ka/H ,\qquad H= {\rm ISO}(3)=\langle \>K,\>J\rangle ,
\label{u3}
\ee 
where $ {\rm C}_\ka$ is the Lie group with Lie algebra  $\ca_\ka$ and $H$ is   the isotropy subgroup isomorphic to ${\rm ISO}(3)$  spanned by rotations and (commuting) Carrollian boosts. We stress that 
the main metric for the Carrollian spacetimes is again degenerate  and has   constant  sectional  curvature equal to $-\ro^2=+\ka$~\cite{BGH2020snyder}, instead of $\ro^2=-\ka$ (as in the Lorentzian and Newtonian spacetimes),  providing a 3D ``absolute-space"; this is therefore spherical in the para-Euclidean space with $\L>0$ and hyperbolic in the para-Poincar\'e one with $\L<0$. 
There does also exist 
an invariant foliation under the Carrollian group action characterized by a  ``subsidiary" 1D  time metric  $g'$  restricted to each leaf of the foliation~\cite{BGH2020snyder}.

 In this case the  contraction  map (\ref{u1}) yields   the following transformations  for     geodesic parallel and ambient coordinates
\bea
&& x^0\to c^{-1} x^0,\qquad x^a\to  x^a,\nonumber\\
&&  \s^0\to  c^{-1}  \s^0,\qquad\, \s^a\to \s^a ,\qquad s^4\to s^4.
\label{u4}
\eea
By introducing them   into \eqref{ambientspacecoords} and   applying the limit $c\to 0$ we find the   relations between Carrollian geodesic parallel and ambient coordinates, namely
\bea
&&\s^4= \cosh ( \ro x^1) \cosh ( \ro x^2)\cosh ( \ro x^3) , \nonumber\\[2pt]
&&\s^0= x^0  \cosh  (\ro x^1) \cosh ( \ro x^2)\cosh ( \ro x^3),\nonumber\\[2pt]
&&\s^1=\frac {\sinh ( \ro x^1) }\ro   \cosh ( \ro x^2)\cosh ( \ro x^3),  \label{u5} \\
&& \s^2=\frac { \sinh ( \ro x^2)} \ro\cosh ( \ro x^3), \nonumber\\
&&  \s^3=\frac { \sinh ( \ro x^3)} \ro ,
\nonumber
\eea
fulfilling 
\begin{equation}
  ( s^4)^2   -\ro^2 \bigl( (\s^1)^2+ (\s^2)^2+ (\s^3)^2 \bigr)=1  ,\qquad \ro^2=-\ka,
   \label{u6}
\end{equation}
(see (\ref{pseudo})), showing the spherical ($\ro$ imaginary)  or hyperbolic  ($\ro$ real) nature of the underlying  3-space. Observe that for the flat Carroll space both sets of coordinates coincide $s^\mu\equiv x^\mu$.

The 3D spatial  main metric in ambient coordinates $s^a$ was introduced in~\cite{BGH2020snyder} and by means of the parametrization (\ref{u5}) this can also be written     in terms of geodesic parallel coordinates $x^a$ as
\bea
&&\dd\sigma_\ka^2 =     \frac{  \ka\left(   \s^1 \dd\s^1+\s^2 \dd\s^2+\s^3 \dd\s^3  \right)^2} {1   -\ka \bigl( (\s^1)^2+ (\s^2)^2+ (\s^3)^2 \bigr)    }+ (\dd \s^1)^2  + ( \dd \s^2)^2+ (\dd \s^3)^2
 \nonumber\\[2pt]
 &&\qquad =\cosh^2(\ro
x^2) \cosh^2(\ro x^3)(\dd x^1)^2  
+\cosh^2(\ro x^3)( \dd x^2)^2+ (\dd x^3)^2 .
 \label{u7}
\eea
The metric structure on the Carrollian spacetime is completed with    the ``subsidiary" 1D time metric  $g'$ restricted to each leaf of the foliation (the ``absolute-space") and reads
\be
\dd\sigma'^2  =  (\dd s^0)^2   =   \cosh^2  (\ro x^1) \cosh^2 ( \ro x^2)\cosh^2 ( \ro x^3)(\dd x^0)^2 
 \quad \mbox{on}\  \s^a,  x^a=\mbox{constant}.
 \label{u8}
\ee

Now we proceed to study the  ultra-relativistic limit of the solutions of the mCYBE from Proposition \ref{prop:rPHS}
by analysing their LBC. We  again obtain simultaneous fundamental and coboundary LBC~\cite{BGHOS1995quasiorthogonal}  which are defined by
\bea
\begin{array}{lll}
 \mbox{Type I:}&\quad z\to z/c^2  , &\quad z'\to z'/ c^2 , \qquad  r_{\rm I}= zK_1\wedge K_2-z'K_2\wedge K_3. \\[2pt]
  \mbox{Type II:}&\quad z\to z/c , &\quad r_{\rm II}= zK_1\wedge J_1. \\[2pt]
 \mbox{Type III:}&\quad z\to z/c^2  ,&\quad r_{\rm III}= zK_1\wedge K_2. 
\end{array}
\label{u9}
\eea
Consequently, the result is formally the same as in the Newtonian cases (\ref{z10}), so that, again,   the type III deformation is included in the type I for $z'=0$. 
The corresponding   Carrollian PHS in local and ambient coordinates (\ref{u5}) can be derived through contraction from     the Lorentzian PHS in Table~\ref{table1} by taking into account   the transformations (\ref{u4}) and (\ref{u9}). They can also be constructed    by means of the left- and right-invariant vector fields and the Sklyanin bivector (\ref{eq:sklyaninbiv}). The     resulting  PHS  have no ordering problems so that their quantization is immediate.  
The same result follows by applying   the contractions  (\ref{u4}) and (\ref{u9})  to the noncommutative spaces given  in Sections~\ref{s51} and \ref{s52}.   The  noncommutative Carrollian spaces are presented in Table~\ref{table2}.

The noncommutative Carrollian spaces of  type I are trivial ones, since all their Poisson brackets vanish.
Hence there are no Heisenberg--Weyl algebras with central quantum time coordinate $\hat s^0$, a fact that could be expected since, in the classical picture,   time is no longer absolute and its role is replaced by space.

 In the noncommutative spaces of  type II,  all the spatial coordinates commute, thus reminding the  ``absolute-space"   feature of the Carrollian spaces. Moreover,    the  quantum  spatial coordinate  $\hat s^1$ becomes a central generator (such a role was played by   the quantum  time coordinate $\hat s^0$ in the noncommutative Newtonian spaces). 
In the    proper Carroll case with $\eta\to 0$ we find that
\be
  [\hat  x^0,\hat x^2]=z\hat x^1\hat x^3 , \qquad  [\hat  x^0,\hat x^3]=-z\hat x^1\hat x^2,\qquad  [\hat  x^0,\hat x^1]=0   .
  \label{u10}
  \ee
  Hence for a fixed eigenvalue of $\hat s^1$ this noncommutative space can be seen as   a quadratic 2D Euclidean algebra with the quantum time coordinate $\hat x^0$  behaving as   a rotation   on the quantum  2-space $(\hat x^2, \hat x^3)$. Thus (\ref{u10}) has    a quadratic  Casimir   given by
\be
 \hat \cas =  (\hat x^2)^2 + (\hat x^3)^2.
 \ee
  Therefore,   under the ultra-relativistic limit,  the   noncommutative Minkowski space (\ref{ncspace2}) is now decoupled as a central quantum spatial coordinate $\hat x^1$ plus  the noncommutative Carroll space  (\ref{u10}). 
Notice that the contraction (\ref{u4}) of the Casimir (\ref{cas}) gives  $\cas=(\s^1)^2+ (  \s^2)^2 + (  \s^3)^2$ but here $  s^1$ is a central element.

As far as the  Carrollian  Lie bialgebras  provided by  (\ref{u9}) are concerned, we remark that they are again endowed with a   non-trivial Euclidean sub-Lie bialgebra $\delta(\mf h)=\mf h \wedge \mf h\ne 0$.    For the two-parametric  Lie bialgebra of  type I  $(\ca_\ka,\delta_{\rm I})$ the primitive generators are $P_0$ and $\>K$, while for the type II $(\ca_\ka,\delta_{\rm II})$ these are $\{K_1,J_1 ,P_0\}$. Again, the corresponding  quantum algebras   can be easily deduced   through         twist operators.


\sect{Concluding remarks}

The fate of Lorentz invariance in the context of deformed symmetries is a long-standing question. In this paper, we have studied this problem from a quantum group approach. Namely, we have proven that there are no possible quantum group deformations of the classical spacetime symmetries for the three maximally symmetric spacetimes of constant curvature such that the Lorentz subgroup is kept undeformed. Furthermore, we have proved that there are only three families of non-isomorphic quantum group deformations that keep the Lorentz sector as a quantum subgroup. This directly implies that there is no (non-trivial) covariant noncommutative spacetime with an undeformed Lorentz isotropy subgroup, and that there are three families of non-isomorphic noncommutative spacetimes endowed with a quantum Lorentz subgroup. 
These results differ from the (2+1)D  case  for which we have shown that only for the Poincar\'e group there exists a deformation keeping 
 the Lorentz subgroup     undeformed.

Moreover, we have explicitly constructed the Poisson version of these three noncommutative spacetimes by using both ambient and local coordinates. In terms of ambient coordinates the three Poisson bivectors give rise to PHS which are formally identical for any value of the cosmological constant, something that in our kinematical approach is indeed related to the fact that the cosmological parameter $\L$ does not appear explicitly on the Lorentz subgroup. However, by using local coordinates the spacetime curvature explicitly arises in the noncommutative structures thus distinguishing the (A)dS and Minkowski cases. We stress that   in the flat Minkowski space the three noncommutative spacetimes that we obtain are quadratic homogeneous, in striking difference with respect to the well-known $\kappa$-Minkowski spacetime. In addition, we have studied the non-relativistic and ultra-relativistic limits, obtaining the Newtonian and Carrollian noncommutative spacetimes in which, respectively, their absolute time and absolute space structure is preserved.

An interesting open problem is indeed the construction of the full Hopf algebra structures associated with the three families of quantum groups here presented, together with their corresponding quantum $\mathcal R$-matrices. In fact, 
family II   can be straightforwardly quantized through the corresponding Reshetikhin twist operator. On the other hand, a  Hopf algebra structure of the  type III was constructed in \cite{BH1996,Shariati}  for $\mf {so}(2,1)$ and for higher dimensional  (A)dS algebras in
\cite{Herranz2000,Herranz2002}. The quantum algebra corresponding to the biparametric $r$-matrix of type I  will be indeed more involved from a technical viewpoint, but this problem is worth to be faced since its associated DSR Hopf algebra model would be a novel one that would be relevant in situations when null-plane symmetry is physically relevant.

Furthermore, in order to make use of the results here presented in a General Relativistic framework, it would be essential to face the issue of general covariance for a noncommutative field theory in which any of the three novel noncommutative spacetimes here introduced is assumed as the basic local structure of the noncommutative spacetime, and whose ten-dimensional local covariance would be provided by the corresponding quantum Hopf algebras here characterized (see~\cite{Bojowald:2017kef} and references therein). In this sense, the fact that the three  spacetimes here introduced come from classical $r$-matrices which are skewsymmetric solutions to the CYBE makes it possible to construct their associated Hopf algebras from Drinfel'd twist operators $F$~\cite{Drinfeld1987icm} and, from the latter, to obtain the corresponding $\ast$-products for both the noncommutative spacetimes and the full quantum groups of local covariance~\cite{Drinfeldast}. This would open the path to the quantization-deformation of the complete group of diffeomorphisms for a General Relativistic noncommutative theory by following the approach presented in~\cite{Bojowald:2017kef}, where the hypersurface deformation algebroid (HDA) is constructed, and where the $\ast$-product on the background noncommutative spacetime plays an essential role in order to define Drinfel'd twists of spacetime diffeomorphisms and their deformed actions. In particular, the building block for this approach in the deformation of type III would be given by the so-called Jordanian twist operator, whose $\ast$-product was explicitly given in~\cite{Ohn}. Moreover, the $\ast$-products for the noncommutative Minkowski spacetimes of both the first subfamily of type I and the type II deformation should be relatively simple to construct, since both deformations are generated by a Reshetikhin twist. Also, an alternative route to an emergent Gravity theory with general  covariance in a noncommutative setting would be provided by the construction of matrix models of the type~\cite{Steinacker:2016vgf} (see also~\cite{Steinacker:2019fcb} and references therein) where the covariant Euclidean quantum spaces provided by 4D fuzzy spheres would be substituted by the Lorentzian noncommutative spacetimes with quantum group symmetry here presented. Indeed, within this approach the complete study of the representation theory of the chosen spacetime algebra and of the associated covariance quantum group would be necessary as the first step in order to analyse the feasibility of the full construction of nocommutative gauge and matter fields on this background.


\section*{Acknowledgements}

This work has been partially supported by   Agencia Estatal de Investigaci\'on (Spain)  under grant  PID2019-106802GB-I00/AEI/10.13039/501100011033,  and by Junta de Castilla y Le\'on (Spain) under grants BU229P18 and BU091G19. 
 The authors would like to acknowledge the contribution of the   European Cooperation in Science and Technology COST Action CA18108.

\small

\end{document}